\newtheorem{lemma}{Lemma}
\title{Quantumized Microwave Detection Based on $\Lambda$-Type Three-level Superconducting System: HMM Modeling and Performance Prediction}
\author{Junyu Zhang, Chen Gong, Shangbin Li, Shanchi Wu, Rui Ni, Chengjie Zuo, Jinkang Zhu, Ming Zhao, and Zhengyuan Xu
		\thanks{This work was supported by National Key Research and Development Program of China (Grant No. 2018YFB1801904), Key Program of National Natural Science Foundation of China (Grant No. 61631018), Key Research Program of Frontier Sciences of CAS (Grant No. QYZDY-SSW-JSC003). Junyu Zhang, Chen Gong, Shangbin Li, Shanchi Wu, Chengjie Zuo, Jinkang Zhu, Ming Zhao, and Zhengyuan Xu are with Key Laboratory of Wireless-Optical  Communications, Chinese Academy of Sciences, School of Information Science and Technology, University of Science and Technology of China, Hefei, China. Email: jy970102@mail.ustc.edu.cn, \{cgong821,shbli, wsc, czuo, jkzhu, zhaoming, xuzy\}@ustc.edu.cn.

Rui Ni is with Huawei Technology, Shenzhen, China. Email: raney.nirui@huawei.com.}}
\date{}
\begin{document}
\maketitle{}

\begin{abstract}
We adopt artificial $\Lambda$-type three-level system with superconducting devices for microwave signal detection, where the signal intensity reaches the level of discrete photons instead of continuous waveform. Based on the state transition principles of the three-level system, we propose a statistical model for microwave signal detection. In addition, achievable transmission rates and signal detection based on the proposed statistical models are investigated for low temperature conditions in deep space communication scenario. It is predicted that high sensitivity can be achieved by the proposed system. We further characterize the received signal considering the saturation phonomenon of three-level system, which reveals negligible performance degradation caused by saturation under weak received power regime.
\end{abstract}
{\small {\bf Key Words: Microwave photon detection, $\Lambda$-type Three-level system, Superconducting devices, Energy-level transition.}}

\section{INTRODUCTION}
With the development of wireless communication systems, the signal reception and detection under weak power regime has attracted extensive attention from both academia and industrial areas. One application lies in satellite communication under long transmission distance  \cite{DS1,DS2,DS3,DS4,DS5}. In particular, extremely high channel attenuation due to long transmission distance requires large transmission power, large antenna size and high-sensitivity receiver.
Among the three factors, high-sensitivity receiver can lead to reduced transmission power requirements and antenna size, which becomes the most fundamental one in improving the communication performance. Based on the principle of wave-particle duality, under extremely weak electromagnetic field intensity, microwave signal may degenerate from continuous waveforms to discrete photons. Such fact inspires us to adopt microwave photon level detection.

On the other hand, from perspectives of condensed matter physics and quantum physics, Josephson junction \cite{SIS1,SIS2,SIS3,SIS4,hbt} or nitrogen-vacancy center\cite{NV1,NV2} is adopted for microwave photon sensing. Compared with nitrogen-vacancy center, Josephson junction using super-conducting devices has higher sensitivity for the signals with carrier up to Giga Hertz. Moreover, super-conducting devices can significantly reduce the resistance compared with the currently adopted semi-conducting devices, and thus can significantly increase the detection sensitivity. The superconducting circuit is also a good choice to implement the quantum computer bus and quantum nodes\cite{BenjaminHuard1,BlakeJohnson1}. In deep space communication, the thermal noise of space receiver rather than interference becomes a crucial factor on the communication performance. Low thermal noise can be realized due to low temperature in the deep space scenario, which can be down to several Kelvins.

Another way to achieve single photon detection is based on the strong coupling between a single quantum emitter and a one-dimensional photon field \cite{TLevel_t}. Due to the energy dissipation interference between the incident field and the radiation of the emitter, the interaction between the emitter and the photon is greatly enhanced. This opens up the possibility of determining the control of the quantum system by a single photon and the possibility of single-photon level detection. In the field of optics, there have been extensive theoretical and experimental works in the quantum physic area\cite{op_TL1,op_TL2,op_TL3,op_TL4,op_TL5,op_TL7,op_TL8,op_TL9,op_TL10}. In the microwave field, coupling system of two-level systems and quantum harmonic oscillators can interact with microwave photons based on cavity QED (Quantum Electrodynamics) or circuit QED, and can be used for single microwave photon detection and quantum states synthesis\cite{SergeHaroche1,DavidSchuster1,AndrewCleland1}. In particular, a single microwave photon can almost certainly induce a transition in a $\Lambda$-type three-level system composed of a resonator and a superconducting qubit dispersion coupling, and change the energy level of the system \cite{ncom1}. In this case, the driving signal induces the coupling system to produce the Rabi oscillation to design the dressed state. The intensity of the driving signal determines the coupling degree between the resonator level and the superconducting qubit level in the dressed state. Appropriate driving signal intensity makes the four main decay paths of the coupled system have the identical decay rate. We call the three-level system working in $\Lambda$ mode\cite{driveTL,Yamamoto_2014}. In case of impedance matching, the reflected field amplitude of the continuous microwave signal incident on the three-level system is almost zero. In other words, almost each input microwave photon can cause the energy level transition of the three-level system  \cite{TL_down1,TL_down2,TL_down3}, which implies that the three-level system has high efficiency in single-photon detection.

Compared with optical photon detection, the main difficulty of microwave photon detection lies in the low energy per microwave photon, which is approximately five orders of magnitude lower than that of an optical photon. Considering the energy of a single optical photon, the detection in room temperature is straightforward. However, due to the low energy of a single microwave photon, the detection needs to be delicately designed.

The microwave photon detector based on artificial three-level system has several advantages. It adopts coherent quantum dynamics to minimize the energy loss during detection and allows the resonance driver to be quickly reset. In addition, the detection does not require time-shaping for input photons. Finally, it can achieve a high detection efficiency based on single device \cite{ncom1}. Therefore, it is intereseting to explore the achievable transmission rate and signal detection for microwave signal up to photon level based on the three-level system with super-conducting devices.

Currently there are discussions on whether the physical layer has been pipelined. Our work responds to such discussion that the pipeline critically depends on the underlying devices, and new devices may bring new challenges. In order to evaluate the performance of the communication system based on $\Lambda$-type three-level system, we also need to analyze the multiphoton response of the three-level system. This paper theoretically builds a statistical model, analyzes the achievable communication rate, and characterizes the received signal for the three-level system under consideration.

In this paper, we first introduce the three stages of a single microwave photon detector based on an artificial $\Lambda$-type three-level system, and propose a statistical model on the three-level system under microwave photons. Based on the statistical model, we investigate the achievable transmission rate and signal detection, and show that the $\Lambda$ three-level system can be predicted to achieve significant gain over the currently deployed sophisticated system under temperature and bandwidth normalization. Finally, we characterize the received signal considering the saturation phonomenon, which reveals negligible performance degradation caused by saturation under weak received power regime. We also analyze the super-/sub-Poisson characteristics of the received signal.

The remainder of this paper is organized as follows. In Section \ref{sec2}, we introduce the composition and working process of microwave photon detector based on three-level system. In Section \ref{sec3}, we propose a single-photon absorption model for the three-level system and calculate the qubit excitation rate under the Poisson arrival of photons. In Section \ref{sec4}, we propose a hidden Markov model in a three-level system, and simulate the bit error rate and achievable transmission rate. The impact of the three-level saturation phenomenon on communication is analyzed in Section \ref{sec5}. Finally, we conclude this paper in Section \ref{sec6}.

\section{MICROWAVE PHOTON DETECTOR BASED ON ARTIFICIAL $\Lambda$-TYPE THREE-LEVEL SYSTEM}\label{sec2}
We propose an end-to-end communication architecture based on the $\Lambda$-type three-level superconducting system, as shown in Fig. \ref{comsys_fig}. The originating antenna emits microwaves which are attenuated through the wireless channel. The receiving end antenna receives extremely weak microwave signals, so that the energy within a symbol period can be compared with the microwave photon energy $h\nu$. And because of the wave-particle duality, the received signal exhibits the characteristics of microwave photons. The received signal resonates with the three-level system after filtering out noise from other frequency bands. The output signals of the three-level system are amplified and sampled, which are fed into the digital processor.
\begin{figure*}[htbp]
  \setlength{\abovecaptionskip}{-0.2cm} 
  \setlength{\belowcaptionskip}{-2cm}
  \centering
  \includegraphics[width=1.95\columnwidth]{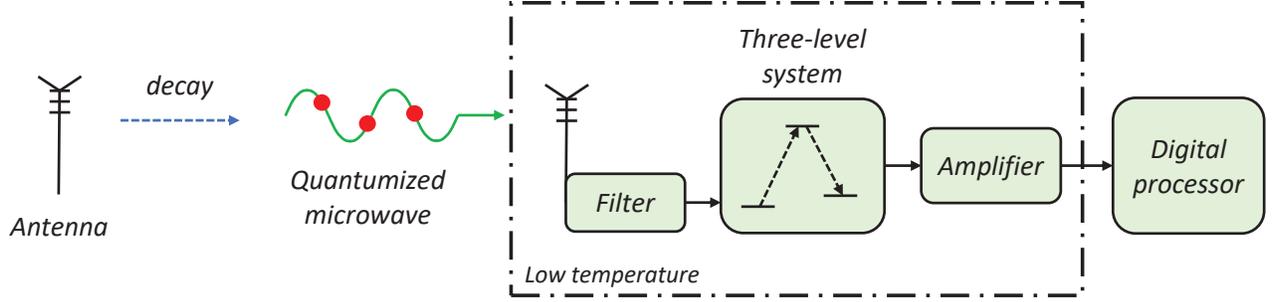}
  \caption{Schematic diagram of communication architecture based on $\Lambda$-type three-level system.}
  \label{comsys_fig}
\end{figure*}

The three-level system under consideration is shown in Fig. \ref{sys_fig}, where the superconducting qubit is dispersively coupled to the transmission line resonator. The resonator is further coupled to a semi-infinite waveguide (WG1) through which the signal photon pulse to be detected is input. WG1 is also adopted to read the qubit and reset the system. Another waveguide (WG2) can apply a driving pulse to the qubit.\cite{TLevel_t}. Assuming on-off keying (OOK) modulation at the transmitter, the detection aims to determine which symbol is transmitted.

\begin{figure}[htbp]
  \setlength{\abovecaptionskip}{-0.2cm} 
  \setlength{\belowcaptionskip}{-2cm}
  \centering
  \includegraphics[width=1\columnwidth]{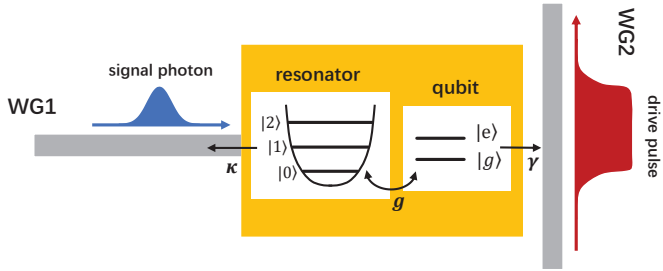}
  \caption{Schematic diagram of the three-level system under consideration (replotted from \cite{TLevel_t}). Symbol $\kappa$ represents the total decay rate of the resonator; symbol $g$ represents the qubit-resonator coupling; and symbol $\gamma$ represents the total decay rate of the qubit.}
  \label{sys_fig}
\end{figure}

As shown in Fig. \ref{stage_fig}, the photon detection consists of three stages: the capture stage, the readout stage, and the reset stage \cite{TLevel_t}. During the capture stage, the superconducting qubit-resonator coupling device enters the $\Lambda$ mode under the driving pulse, where the resonator can transition from the ground state to the excited state. In the readout stage, the energy level of the three-level system is read out through a parametric phase-locked oscillator (PPLO). In the reset phase, a reset pulse is injected from waveguide WG1 to quickly return the three-level system to the ground level. The three stages of the three-level system are alternatively operated, as shown in Fig. \ref{stage_fig}. In Fig. \ref{stage_fig}, we employ $|g\rangle$, $|e\rangle$, and $|u\rangle$ to represent the three states of the three-level system, where $|g\rangle$ denotes the ground state,$|u\rangle$ denotes the intermediate state, $|e\rangle$ denotes the our target state, and the qubit is excited in $|e\rangle$ state.
\begin{figure}[htbp]
  \setlength{\abovecaptionskip}{-0.2cm} 
  \setlength{\belowcaptionskip}{-2cm}
  \centering
  \includegraphics[width=1\columnwidth]{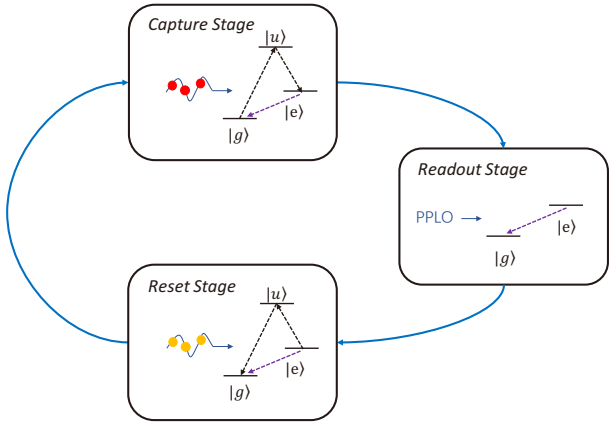}
  \caption{Schematic diagram of the three stages.}
  \label{stage_fig}
\end{figure}

\subsection{Capture of Signal Photons\cite{TLevel_t}}\label{cap}
Driven by the driving pulse, the three-level system enters $\Lambda$ mode from $I$ mode. The two modes are shown in Fig. \ref{mode_fig}. 
When there is a microwave photon input, depending on the photon frequency, the system can transition via $|\tilde{1}\rangle \rightarrow|\tilde{3}\rangle \rightarrow|\tilde{2}\rangle$ or $|\tilde{1}\rangle \rightarrow|\tilde{4}\rangle \rightarrow|\tilde{2}\rangle$ to achieve $|\tilde{1}\rangle$ to $|\tilde{2}\rangle$. If there is no microwave photon input, the system energy level still stays at $|\tilde{1}\rangle$.
\begin{figure}[htbp]
  \setlength{\abovecaptionskip}{-0.2cm} 
  \setlength{\belowcaptionskip}{-2cm}
  \centering
  \includegraphics[width=0.8\columnwidth]{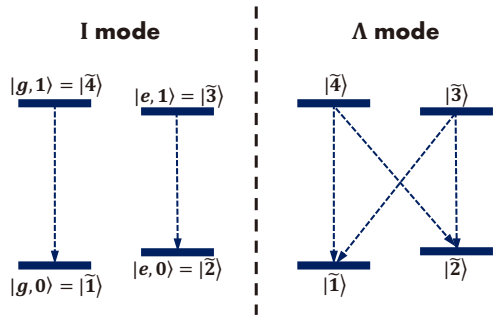}
  \caption{Two modes of three-level system (replotted from \cite{TLevel_t}).}
  \label{mode_fig}
\end{figure}

In the effective three-level system, photon detection is achieved via detecting the energy levels of the superconducting qubit. Energy level $|e\rangle$ means that the input signal is likely to contain microwave photons and the energy level $|g\rangle$ means that the input signal is less likely to contain microwave photons. Measurement should be performed after the driving signal disappears to reduce the influence of the dressed state, improve detection efficiency and reduce the probability of dark count. The amplitude of the driven signal is set to $\Omega_d=\Omega_d^{imd}$ to optimally capture the incident microwave photons. At the optimum driving pluse amplitude, the radiation attenuation rates from $|\tilde{1}\rangle \rightarrow|\tilde{u}\rangle$ and $|\tilde{u}\rangle \rightarrow|\tilde{2}\rangle$ are equal ($u=3$ or $4$). In other words, the three-level system has a high microwave photon absorption rate because of reducing sample reflection and elastic scattering\cite{TL_down3}.

An important factor that affects the detection success rate during the capture phase is the lifetime of the qubits. If the input microwave photon pulse width is too long, the detection efficiency is reduced due to long-term natural decay. When the pulse width of the microwave photon is very short, the detection efficiency of the three-level system is reduced because there is not enough time to make the transition. Under different qubit decay rates $\gamma$, the detection efficiency of different pulse lengths is shown in Fig. \ref{paper_exr_fig}. The smaller the $\gamma$ value, the more sensitive the response of the three-level system to the input signal.

\begin{figure}[htbp]
  \setlength{\abovecaptionskip}{-0.2cm} 
  \setlength{\belowcaptionskip}{-2cm}
  \centering
  \includegraphics[width=1\columnwidth]{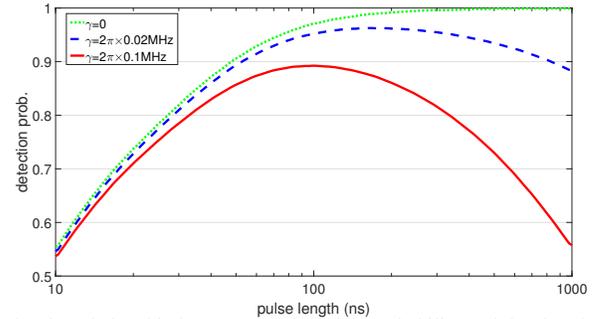}
  \caption{The relationship between the detection probability and the signal pulse length $l$ (replotted from \cite{TLevel_t}).}
  \label{paper_exr_fig}
\end{figure}
\subsection{Readout Stage\cite{TL_Readout}}
In the readout phase, we can use a parametric phase-locked oscillator (PPLO) to distinguish the two energy level states of superconducting qubits $|e\rangle$ and $|g\rangle$. The measurement device and pulse sequence are shown in Fig. 4 in \cite{TL_Readout}. The readout signal is injected into the three-level system, and the reflected wave enters the PPLO for phase lock. The output signal of PPLO is used to read the energy level state. The difference between the reflection phase of the ground state and the excited state of the three-level system is $\pi$ because of the designed readout signal frequency. The circulator in the circuit causes the microwave signal entering any port to be transmitted only to the next port in rotation.

The error in the readout phase is mainly caused by the natural decay of $|2\rangle\rightarrow|1\rangle$ during the measurement process. Assuming that the time used for phase lock is $t_w$, the readout error can be expressed as $1-e^{-\gamma t_w}$. The phase lock error of the phase lock device itself is negligible compared with other errors \cite{TL_Readout}. In other words, the dark count in the readout phase can be ignored.

\subsection{Reset of the System\cite{TLevel_t}}\label{sec_reset}
The reset phase aims to reset the three-level system back to the initial level. Although the system can be reset by relying on the natural decay of the qubits, the reset requires a long time because of the low decay rate. In order to shorten the dead time of the detector, a microwave transition method can be used for quick reset.

A driving signal is injected from waveguide WG2, and a microwave photon pulse with an average photon number $n$ at a specific frequency is injected into the waveguide WG1 to enable the three-level transition $|\tilde{2}\rangle \rightarrow|\tilde{u}\rangle \rightarrow|\tilde{1}\rangle$. 

\section{SIGNAL CHARACTERIZATION UNDER MICROWAVE PHOTON POSSION ARRIVAL}\label{sec3}

\subsection{Statistical Model of Single-Photon Absorption}
We characterize the photon absorption of the three-level system based on the exponential law of energy level transitions. Under the optimal working condition, the reflection coefficient $|r|$ of the three-level system to microwave photons tends to zero. In other words, the incident microwave photons can cause energy level transitions in a three-level system. According to Section \ref{cap}, we assume that the equivalent transition rate of the three-level transition process is $\kappa/4$, the natural decay rate of the qubit is $\gamma$, and the probability distribution of microwave photon arrival time is ${|f_s(t)|} ^2\triangleq\rho(t)$. When $-\left(\beta l+w\right)/2<t<\left(\beta l+w\right)/2$, the driving signal becomes active and the incident photons can be absorbed, and the system observes at time $t_{o}>(\beta l+w) / 2$. The probability of the excited state at the observation time $t_o$, denoted as $\mathbb{P}(|\tilde{2}\rangle | t=t_{o})$, is given by,
\begin{equation}
\mathbb{P}(|\tilde{2}\rangle | t=t_{o})=\int_{-t_i}^{t_i} d t \rho(t) \int_{t}^{t_i} d q \frac{\kappa}{4} e^{-\frac{\kappa}{4}\left(q-t\right)} e^{-\gamma\left(t_{o}-q\right)}
.\label{e4}\end{equation}
Based on Eq. (\ref{e4}), we can get
\begin{equation}\mathbb{P}(|\tilde{2}\rangle | t=t_{o})=\int_{-t_{i}}^{t_{i}} d t \frac{\kappa\rho(t)e^{-\gamma \Delta_{t}}}{\kappa-4 \gamma} \left(1-e^{-\left(\frac{\kappa}{4}-\gamma\right) \Delta_{d}}\right),\end{equation}
where $t_i=\left(\beta l+w\right)/2$, $\Delta_t=t_o-t$ and $\Delta_d=t_i-t$.

If no microwave photons are incident, the false excitation rate $\left.P_{0}=\mathbb{P}(|2\rangle|| \tilde{1}\rangle, t=t_{o}\right)$. Since the modified states $|\tilde{1}\rangle$ and $|\tilde{2}\rangle$ are orthogonal, we have
\begin{equation}\left.\left.\mathbb{P}(|1\rangle|| \tilde{2}\rangle, t=t_{o}\right)=\mathbb{P}(|2\rangle|| \tilde{1}\rangle, t=t_{o}\right).\end{equation}

The detection probability $P_1$ of single photon incidence is
\begin{equation}P_{1}=\left(1-P_{0}\right) \mathbb{P}\left(|\tilde{2}\rangle | t=t_{o}\right)+P_{0}\left(1-\mathbb{P}\left(|\tilde{2}\rangle | t=t_{o}\right)\right).\label{e7}\end{equation}

The detection efficiency of three-level system is simulated based on Eq. (\ref{e7}). The numerical results are shown in Fig. \ref{exr_fig}. We only adopt the exponential statistical law of energy level transitions. It is seen that the results obtained in Fig. \ref{exr_fig} are very similar to those in Fig. \ref{paper_exr_fig} considering the Hamiltonian of the system.
\begin{figure}[htbp]
  \setlength{\abovecaptionskip}{-0.2cm} 
  \setlength{\belowcaptionskip}{-2cm}
  \centering
  \includegraphics[width=1\columnwidth]{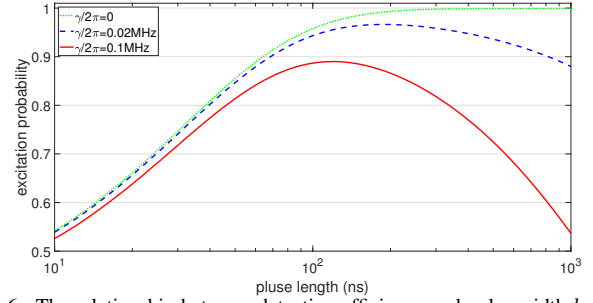}
  \caption{The relationship between detection efficiency and pulse width $l$ under different $\gamma$ values based on statistical model.}
  \label{exr_fig}
\end{figure}

\subsection{Detection Efficiency Under Microwave Photon Poisson Arrival}
Note that the microwave signal exhibits particle properties under extremely weak transmission power, which can be well characterized by Poisson arrival, e.g., a coherent weak microwave signal. Therefore, assuming that the microwave photons conform to Poisson process, we investigate the detection efficiency of the three-level system.

When the arrival time interval of the two photons is shorter than a threshold, the three-level system will obviously saturate\cite{TLevel_t}.  Let $T_c$ denote the duration of the capture phase of the three-level system, and $\bar{n}$ denote the mean number of photons for Poisson arrival within time $T_c$. If the three-level transition time $1 / \kappa \ll \mathrm{T}_{\mathrm{c}} / \bar{n}$, the photon saturation issue can be negligible. In this case, we assume that each photon independently initiates a three-level system transition.

In the case of a Poisson source input with a arrival rate of $\lambda$, assume that microwave photon incidence begins at $t_i=0$, microwave photon incidence ends at $t_f=T_c$, and the observation is performed at $t_o=T_c+\Delta_o$. The probability of excitation at time $t_o$ and $T_c$ under the Poisson arrival intensity $\lambda$, denoted as $\mathbb{P}\left(|\tilde{2}\rangle | \lambda, t=t_{o}\right)$ and $\mathbb{P}(|\tilde{2}\rangle | \lambda, t=T_{c})$, are given by,
\begin{equation}\mathbb{P}\left(|\tilde{2}\rangle | \lambda, t=t_{o}\right)=e^{-\gamma \Delta_{o}} \mathbb{P}\left(|\tilde{2}\rangle | \lambda, t=T_{c}\right),\label{e8}\end{equation}
\begin{equation}\mathbb{P}(|\tilde{2}\rangle | \lambda, t=T_{c})=
\sum_{N=0}^{\infty} \frac{\left(\lambda T_{c}\right)^{N}}{N !e^{\lambda T_{c}}} \mathbb{P}\left(|\tilde{2}\rangle | \lambda, N, t=T_{c}\right),\label{e9}\end{equation}
where $N$ is the number of photons arriving in time $T_c$ and $\mathbb{P}\left(|\tilde{2}\rangle | \lambda, N, t=T_{c}\right)$ is the excitation probability further with given number of photons $N$. It is known from that for Poisson arrival that given the number of arriving photons, the time of arrival of each photon conforms to a uniform distribution, independent from the Poisson distribution arrival rate $\lambda$. Thus, we have
\begin{equation}\mathbb{P}\left(|\tilde{2}\rangle | \lambda, \mathrm{N}, t=T_{c}\right)=\mathbb{P}\left(|\tilde{2}\rangle | \mathrm{N}, t=T_{c}\right),\label{e10}\end{equation}
\begin{equation}\begin{aligned}\mathbb{P}(|\tilde{2}\rangle | N, t=&T_{c})=\\
&\int d s^{N} \frac{N !}{T_{C}^{N}} \mathbb{P}\left(|\tilde{2}\rangle | N, t=T_{c}, s^{N}=\{t^a_i\}\right),\label{e11}\end{aligned}\end{equation}
where $\mathbb{P}\left(|\tilde{2}\rangle | \mathrm{N}, t=T_{c}\right)$ is the excitation probability under uniform distribution of $N$ photons' arrival time, $t_i^a$ is the arrival time of the $i^{th}$ photon and $0\le t_1^a\le t_2^a\le...\le t_N^a\le T_c$. The $\mathbb{P}\left(|\tilde{2}\rangle | N, t=T_{c}, s^{N}=\{t^a_i\}\right)$ is the excitation probability further with given photons arrival time. When a photon arrives, the three-level transition can only be excited when the system energy level is $| \tilde{1}\rangle$. When the three-level system is well initialized, we assume that the photon label causing the three-level transition is given by $K^p=\{k_1,...,k_p\}$, where $p\geq1$ and $1=k_1\le... \le k_p\le N$. The excited state probability under this condition is given by,
\begin{equation}\begin{aligned}
\mathbb{P}(|\tilde{2}\rangle | N, &t=T_{c},s^{N}=\{t^a_i\})=\\
&\sum_{{K\in\cal K}} \mathbb{P}\left(K | N, T_{c},\{t^a_i\}\right) \mathbb{P}\left(|\tilde{2}\rangle | N, T_{c},K,\{t^a_i\}\right),
\label{e12}\end{aligned}\end{equation}
where ${\cal K}$ is the set of all $\{k_1,...,k_p\}$ and $T^N=\left(t_{1}^{a}, \ldots, t_{N}^{a}\right)$.

Due to the neglect of the three-level saturation phenomenon, we assume that the three-level transition meets the principle of incompatibility. In other words, the photon causing the new transition must arrive after the previous photon transition is completed. The probability $f_b(t)$ of the three-level system starting from time 0 and returning to the ground state before time $t$ is
\begin{equation}\begin{aligned}f_{b}(t)&=\int_{0}^{t} d s \frac{\kappa}{4} e^{-\frac{\kappa}{4} s}\left(1-e^{-\gamma(t-s)}\right)\\
&=1-e^{-\frac{\kappa}{4} t}-\frac{\kappa}{\kappa-4 \gamma}\left(e^{-\gamma t}-e^{-\frac{\kappa}{4} t}\right),\label{e13}\end{aligned}\end{equation}
\begin{equation}\begin{aligned}\mathbb{P}(K | N&, T_{c},\{t^a_i\})=\\
&\left(1-f_{b}\left(t_{N}^{a}-t_{k_{p}}^{a}\right)\right) \prod_{i=2}^{p}\left(f_{b}\left(\Delta^g_{i}\right)-f_{b}\left(\Delta^e_{i}\right)\right),\label{e14}\end{aligned}\end{equation}
where $\Delta^g_{i}=t_{k_{i}}^{a}-t_{k_{i-1}}^{a}$ and $\Delta^e_{i}=t_{k_{i}-1}^{a}-t_{k_{i-1}}^{a}$.

Given $\{t_1^a,...,t_N^a\}$ and $\{k_1,...,k_p\}$, the magnitude of the excitation probability depends only on the previous transition time. Let $f_{|\widetilde{2}\rangle}\left(t, t_{0}\right)$ represent the excitation probability that the three-level system starts a three-level transition at $t=0$ and no photons cause a transition in $0<t<t_0$. Let $f_{|\widetilde{1}\rangle}\left(t, t_{0}\right)$ represent the probability that the state of the three-level system is $|\tilde{1}\rangle$ under the same conditions. Then, we can get the following equations,
\begin{equation}\begin{aligned}f_{|\tilde{2}\rangle}\left(t, t_{0}\right)&=\int_{0}^{t} d s \frac{\kappa}{4} e^{-\frac{\kappa}{4} s}\left(e^{-\gamma(t-s)}\right)\\&=\frac{\kappa}{\kappa-4 \gamma}\left(e^{-\gamma t}-e^{-\frac{\kappa}{4} t}\right),\label{e15}\end{aligned}\end{equation}
\begin{equation}\begin{aligned}f_{|\widetilde{1}\rangle}\left(t, t_{0}\right)&=e^{-\frac{\kappa}{4} t}+\int_{t_{0}}^{t} d s \frac{\kappa}{4} e^{-\frac{\kappa}{4} s}\left(1-e^{-\gamma(t-s)}\right)
\\&=e^{-\frac{\kappa}{4} t_{0}}+\frac{\kappa\left(e^{-\frac{\kappa}{4} t_{0}-\gamma(t-t 0)}-e^{-\frac{\kappa}{4} t}\right)}{\kappa-4 \gamma},\label{e16}\end{aligned}\end{equation}
\begin{equation}\begin{aligned}\mathbb{P}&\left(|\tilde{2}\rangle | N, T_{c}, k_{p},\left(t_{k_{p}}^{a}, \ldots, t_{N}^{a}\right)\right)=\\&\frac{f_{|\tilde{2}\rangle}\left(T_{c}-t_{k_{p}}^{a}, t_{N}^{a}-t_{k_{p}}^{a}\right)}{f_{|\tilde{1}\rangle}\left(T_{c}-t_{k_{p}}^{a}, t_{N}^{a}-t_{k_{p}}^{a}\right)+f_{|\tilde{2}\rangle}\left(T_{c}-t_{k_{p}}^{a}, t_{N}^{a}-t_{k_{p}}^{a}\right)}.\label{e17}\end{aligned}\end{equation}

Combining the Eq. (\ref{e8}) to Eq. (\ref{e17}), we can calculate $\mathbb{P}(|\tilde{2}\rangle | \lambda, t=t_{o})$. The detection success rate $p_{\lambda}^c$ in the acquisition phase is given by,
\begin{equation}\begin{aligned}P_{\lambda}^{c}=\left(1-P_{0}\right) \mathbb{P}\left(|\tilde{2}\rangle | \lambda, t=t_{o}\right)+P_{0}\left(1-\mathbb{P}\left(|\tilde{2}\rangle | \lambda, t=t_{o}\right)\right).\end{aligned}\end{equation}
In the readout phase, the probability of PPLO detecting the incident photon is given by
\begin{equation}P_{\lambda}^{o u t}=p_{w} P_{\lambda}^{c},\end{equation}
where $p_w=e^{-\gamma t_w}$ is the successful phase lock in the readout phase.

According to Section \ref{sec_reset}, the probability of reset error is
\begin{equation}P_{\lambda}^{r e}=P_{g}\left(1-P_{\lambda}^{o u t}\right)+P_{e} P_{\lambda}^{o u t}.\end{equation}

\subsection{Numerical Results of Detection Efficiency}
Under $T_c=230ns$, $\Delta_o=35ns$ and $t_w=48ns$, the miss detection probability of photon arrival with different arrival rates, transition rate $\kappa$ and decay rate $\gamma$ is shown in Figure \ref{p2to1_fig}. It can be seen that the excitation rate of qubit rises faster with larger $\kappa$ and smaller $\gamma$. And the qubit excitation rate causes a larger upper limit under the larger $\gamma$. In the case of a reset error, the qubit is initially in an excited state. Under no power input, the system with reset error has a high probability of excited state in the readout phase. Under the correct reset, the response of the detector has a larger dynamic range. Therefore, it has better communication performance than the incorrect reset case.

\begin{figure*}[htbp]
	\centering
	\subfigure[$\gamma=2\pi\times0.1MHz$]{\includegraphics[width=0.98\columnwidth]{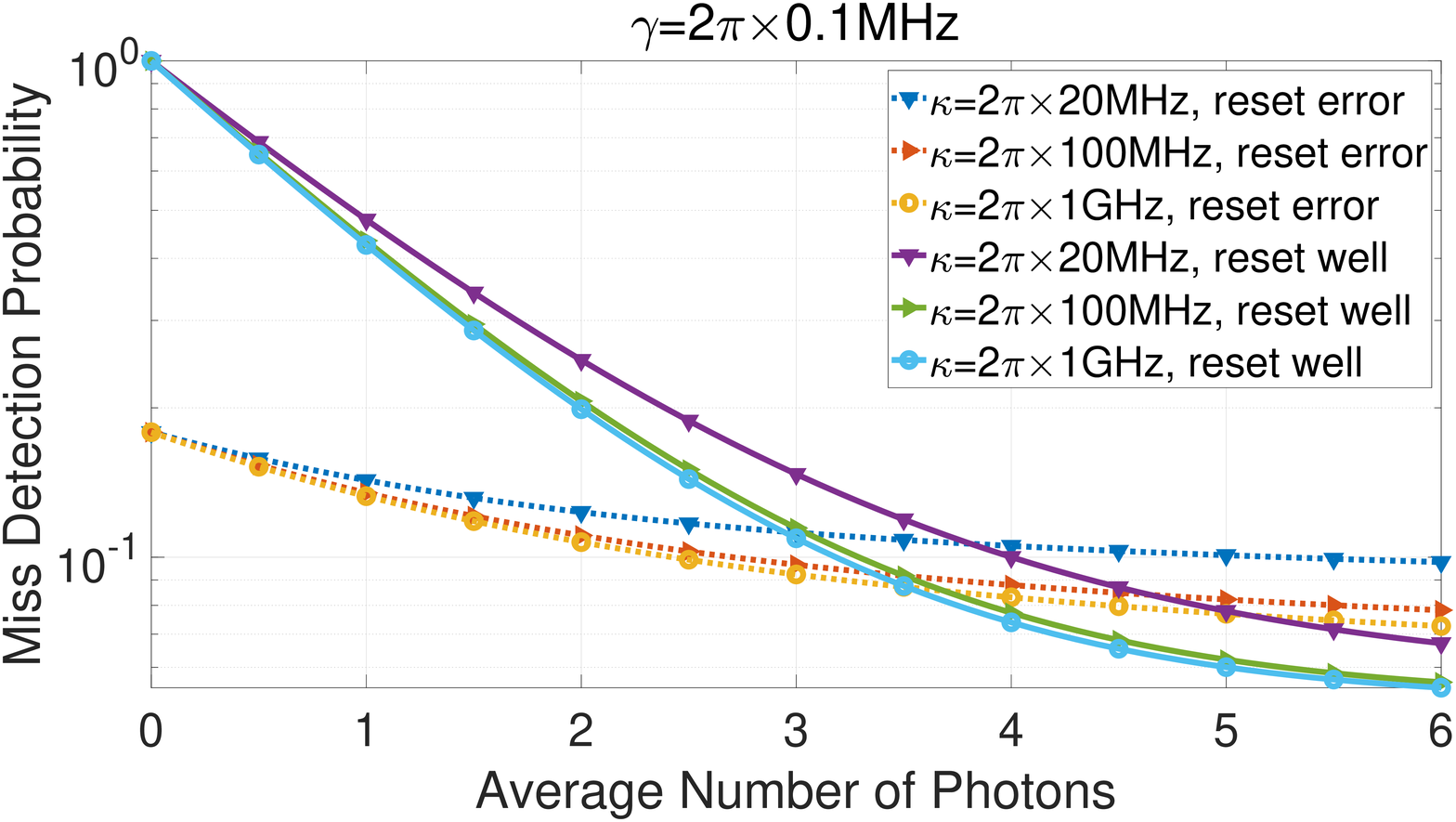}}\quad
	\subfigure[$\gamma=2\pi\times0.2MHz$]{\includegraphics[width=0.98\columnwidth]{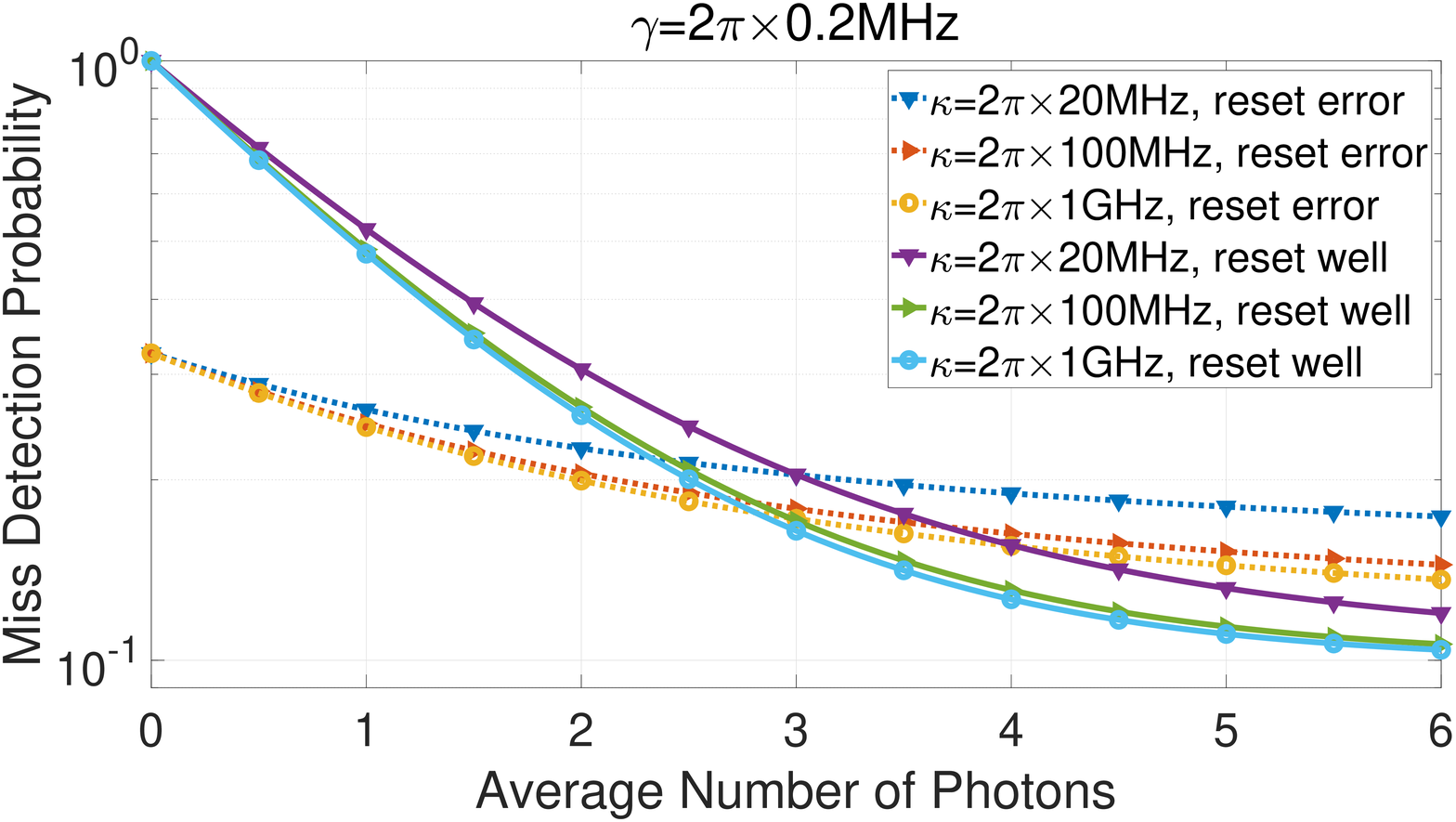}}\\	
	\subfigure[$\gamma=2\pi\times0.4MHz$]{\includegraphics[width=0.98\columnwidth]{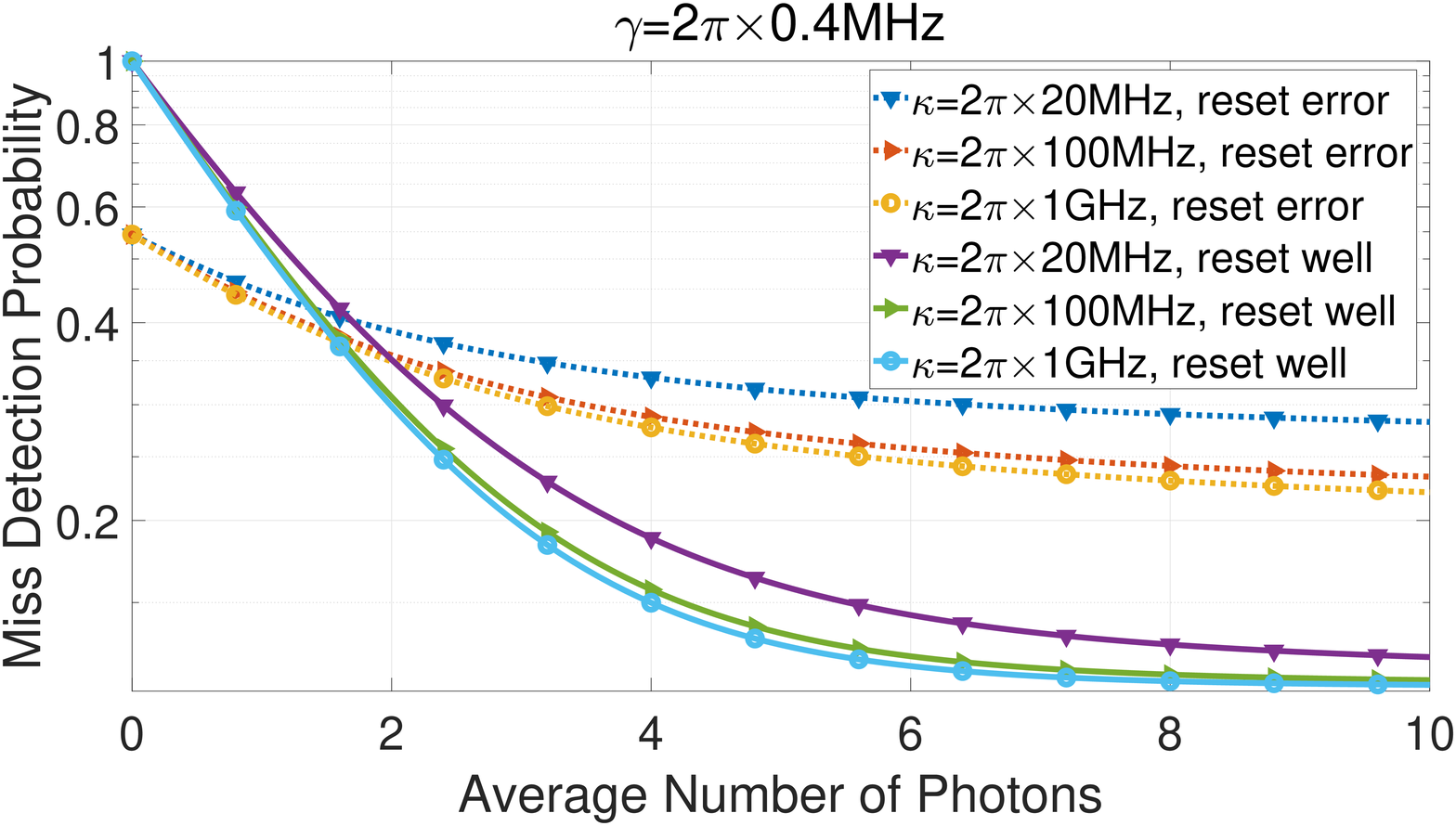}}\quad
	\subfigure[$\gamma=2\pi\times1.0MHz$]{\includegraphics[width=0.98\columnwidth]{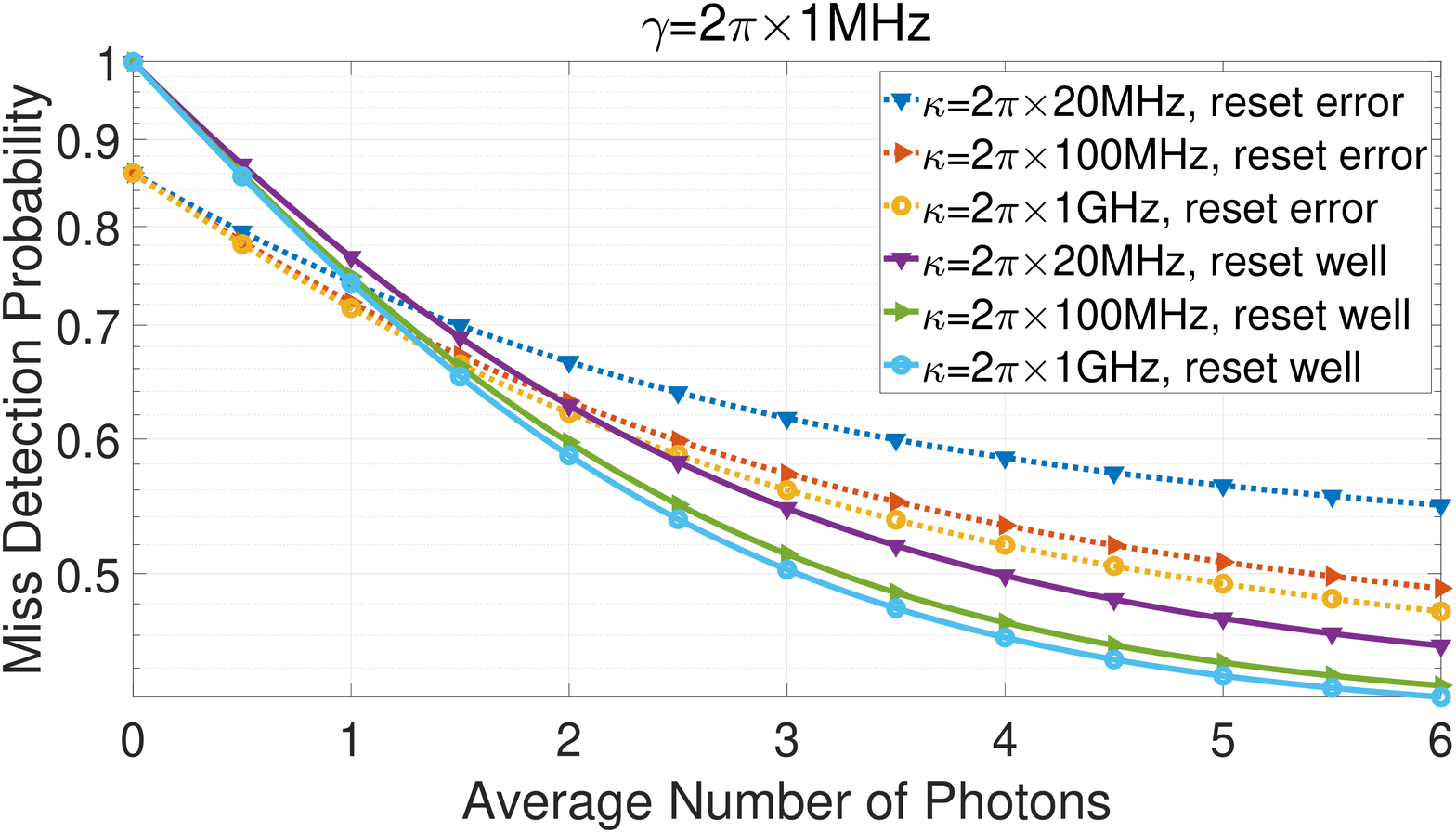}}\\	
	\caption{Miss detection probability at the end of the readout stage.\label{p2to1_fig}}
\end{figure*}

\section{SIGNAL DETECTION BASED ON HIDDEN MARKOV MODEL}\label{sec4}

\subsection{Hidden Markov Model of Three-level System}
Note that the initial state of the three-level system depends on the reset condition at the previous detection. In other words, the detection efficiency of the three-level system is determined by the previous reset and the current signal input. We can only observe the output of PPLO instead of the input microwave signal and the energy level of the three-level system. Therefore, the three-level system can be modeled as a hidden Markov chain in multi-period detection process.

We assume that the photon arrival conforms to the Poisson distribution with OOK modulation. Symbol $0$ means that no signal microwave photons are sent, and symbol $1$ means that the photons arrival rate is $\lambda_1$. We assume that the thermal noise also conforms to the Poisson distribution with arrival rate $n_e$. Assuming that a symbol maintains $N$ three-level detection cycles, the hidden state of the HMM with three-level continuous detection can be characterized as $\left(|i\rangle, S_{j}\right)$, where $|i\rangle$($i=1\ or\ 2$) is the initial energy level at the beginning of the symbol and $S_j$($j=0\ or\ 1$) represents the OOK symbol. The observation quantity is denoted as $o \in\{0,1\}^{N}$, where $0$ represents no photon has been detected, and $1$ represents that the photons have been detected. The state transition diagram of HMM is shown in Fig. \ref{HMM_fig}.
\begin{figure}[htbp]
  \setlength{\abovecaptionskip}{-0.2cm} 
  \setlength{\belowcaptionskip}{-2cm}
  \centering
  \includegraphics[width=0.65\columnwidth]{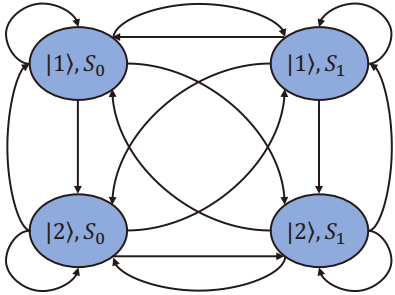}
  \caption{HMM state transition diagram.}
  \label{HMM_fig}
\end{figure}

We assume that two OOK symbols are sent with equal probability and the three-level system is well initialized when starting continuous detection. We can characterize the HMM under the model.

\subsection{Viterbi Decoding for the Three-level System}\label{sec_ber}
Based on the HMM, we adopt Viterbi algorithm for symbol detection\cite{HMM_t}. We assume that the transmitted symbol sequence is $\{S_n\}$ and the decoding results are $\{|\hat{\imath}(n)\rangle, \hat{S}(n)\}$. The symbol error rate can be given by
\begin{equation}\lim _{M \rightarrow \infty} \frac{1}{M} \sum_{n=1}^{M} \operatorname{Pr}(S(n) \neq \hat{S}(n)).\end{equation}

Assuming that the input port has thermal noise $P_N=\kappa_B T_e B$ with a temperature of $T_e$ and the signal bandwidth $B=1/(NT_c)$, where $\kappa_B$ is the Boltzmann constant, the average photon number $n_e$ of thermal noise in time $T_c$ is given by,
\begin{equation}
n_e=\frac{\kappa_B T_e}{Nh\nu}.
\end{equation}

Under the parameters $\gamma=2\pi\times0.1MHz$, $T_c=230ns$, $\Delta_o=35ns$, $t_w=48ns$, same $P_g$ and $P_e$ in Section \ref{sec_reset}, the simulation result of symbol error rate at the different noise temperature $T_e$ and period duration number $N$ is shown in Fig. \ref{ber_fig1}. For a given thermal noise temperature, we adjust the period number $N$, which is inversely proportional to the bandwidth, such that the thermal noise photon number is equivalent to a $10mK$ thermal noise source. According to Fig. \ref{ber_fig1}, a system with a larger $\kappa$ has lower BER because of the more sensitive input signal response. At the same thermal noise level, the system with large $N$ has lower BER. We adopt $\kappa=2\pi\times1GHz$, $T_e=8K$ and $N=800$ to calculate the sensitivity. When $N=800$, the bit rate is $1/(N\times(T_c+\Delta_o+t_w))=1/(800\times(230ns+35ns+48ns))\approx4kbps$. When the power is $-148.3dBm$, the bit error rate reaches $10^{-3}$. According to the OOK bit error rate formula, the SNR margin is $9.8dB$, and thus the normalized sensitivity is $-148.3dBm-9.8dB=-158.1dBm$ for SNR $0dB$. The 4G/5G standard system working under $(4kbps, 8K)$ has a sensitivity gain $10log_{10}(2.2M/4k)dB+10log_{10}(300/8)dB\approx43dB$ when linearly converted from $(2.2Mbps, 300K)$. Considering the LTE signal sensitivity of $-100dBm$ to $-105dBm$ with data rate $2.2Mbps$ for LTE and 5G communication system\cite{equipmentradio}, the sensitivity of 4G/5G system under the same temperature and data rate is $-143dBm$ to $-148dBm$. The sensitivity gain of our proposed structure can reach $10dB$ to $15dB$. Since target BER $10^{-3}$ is considered under low temperature and data rate, we focus on singal power regime below $-145dBm$.
\begin{figure*}[htbp]
  \setlength{\abovecaptionskip}{-0.2cm} 
  \setlength{\belowcaptionskip}{-2cm}
  \centering
  \includegraphics[width=1.95\columnwidth]{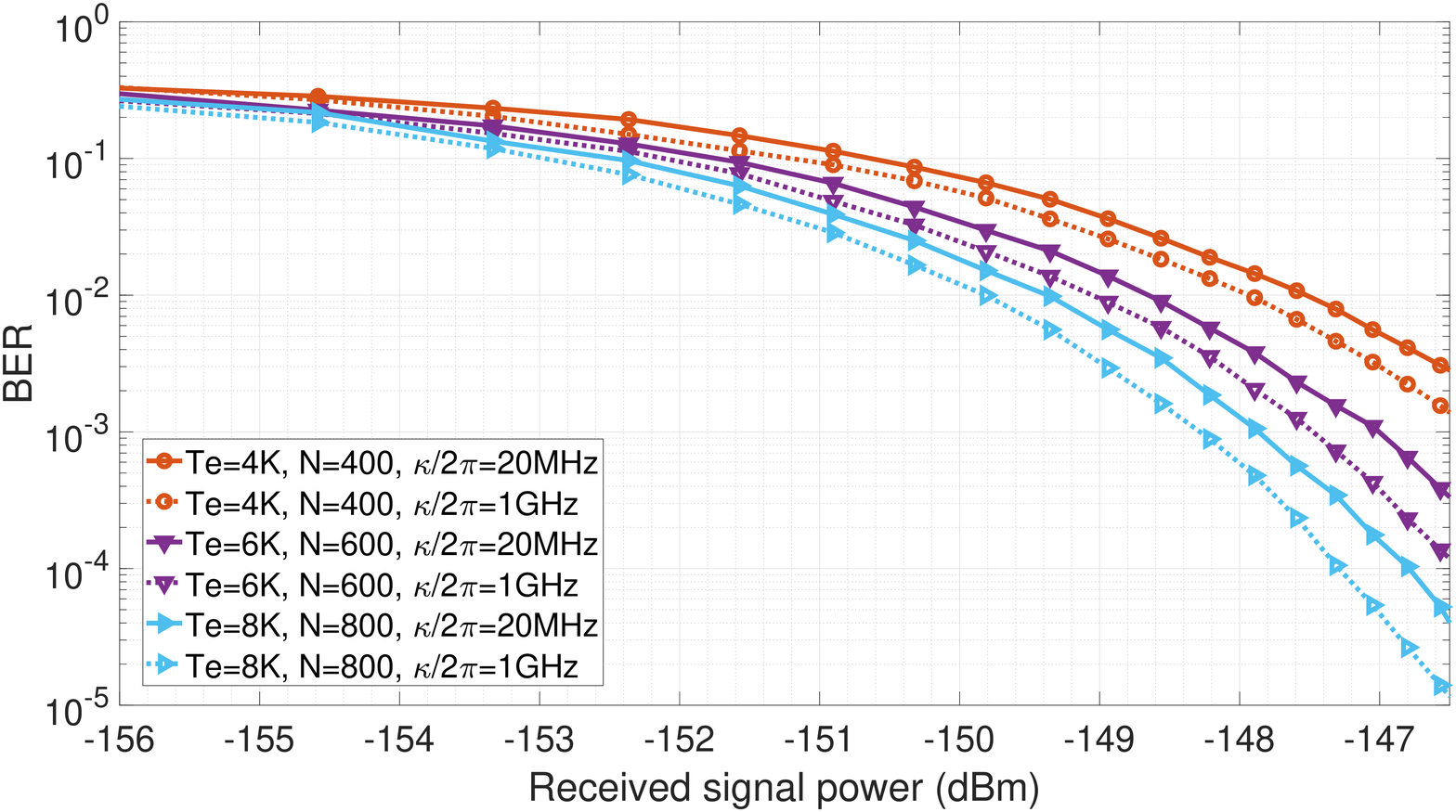}
  \caption{The BER under different received power.}\label{ber_fig1}
\end{figure*}

\subsection{Achievable Transmission Rate}\label{sec_rate}
Assume that the symbol sequence in continuous detection is $S^T=s(1)s(2)...s(T)$ and the output sequence is $O^T=o(1)o(2)...o(T)$. The achievable transmission rate is given by
\begin{equation}\begin{aligned}I&=\lim _{T \rightarrow \infty} \frac{I\left(S^{T}; O^{T}\right)}{T}=\lim_{T \rightarrow \infty} {\frac{H(S^T)}{T}-\frac{H(S^T|O^T)}{T}}\\&=1-\lim_{T \rightarrow \infty} \frac{\sum_{o^t}P(O^T=o^t)H(S^T|O^T=o^t)}{T}.\end{aligned}\end{equation}
We use Monte-Carlo method to simulate the achievable transmission rate \cite{HMM_t}, where the detailed procedure is omitted due to lack of space. Figure \ref{tranC_fig} shows the simulation result of the achievable transmission rate when other parameters are the same as those in Section \ref{sec_ber}.

\begin{figure*}[htbp]
	\centering
	\subfigure[$\gamma=2\pi\times0.1MHz$]{\includegraphics[width=0.98\columnwidth]{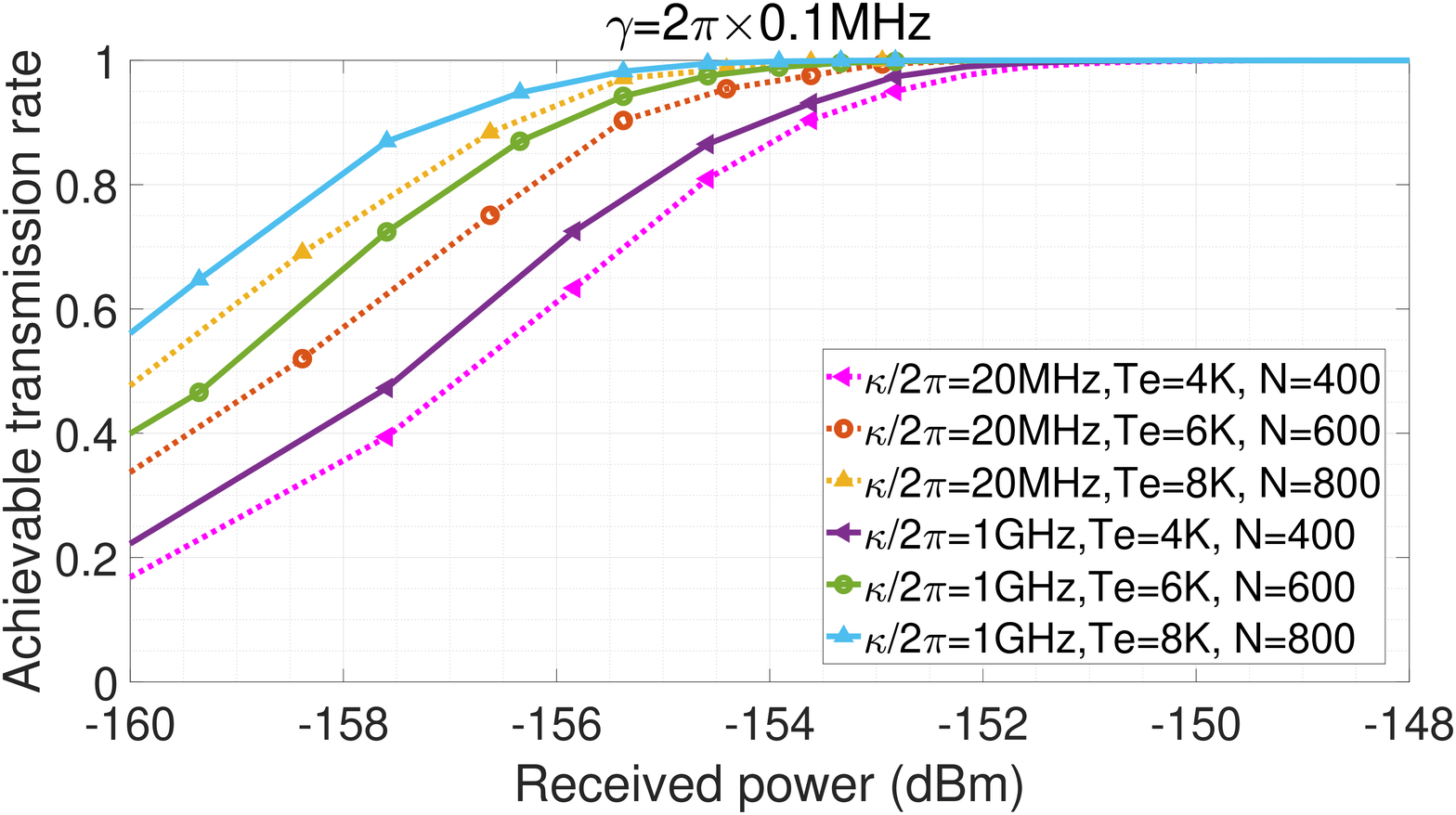}}\quad
	\subfigure[$\gamma=2\pi\times0.2MHz$]{\includegraphics[width=0.98\columnwidth]{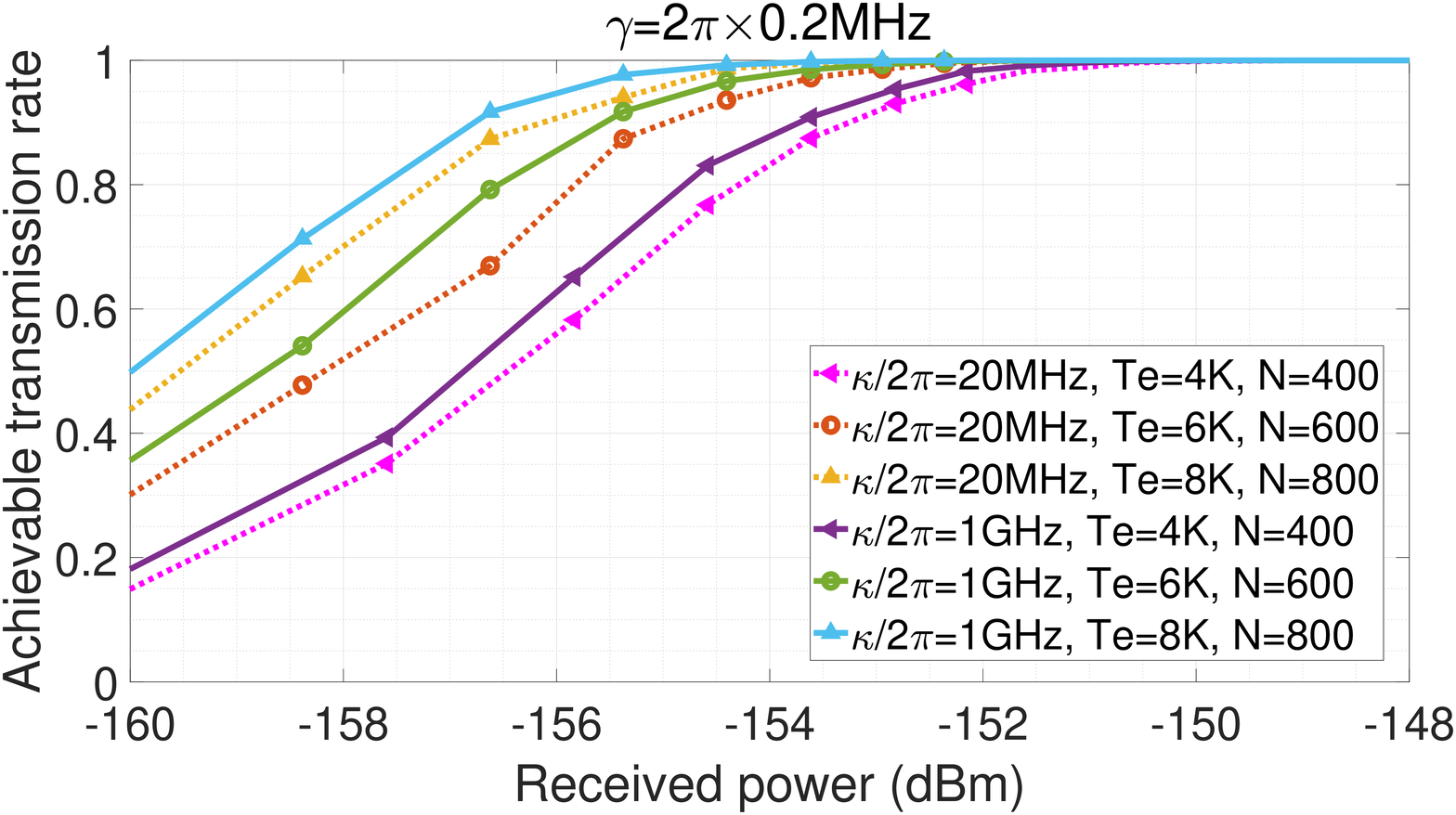}}\\	
	\subfigure[$\gamma=2\pi\times0.4MHz$]{\includegraphics[width=0.98\columnwidth]{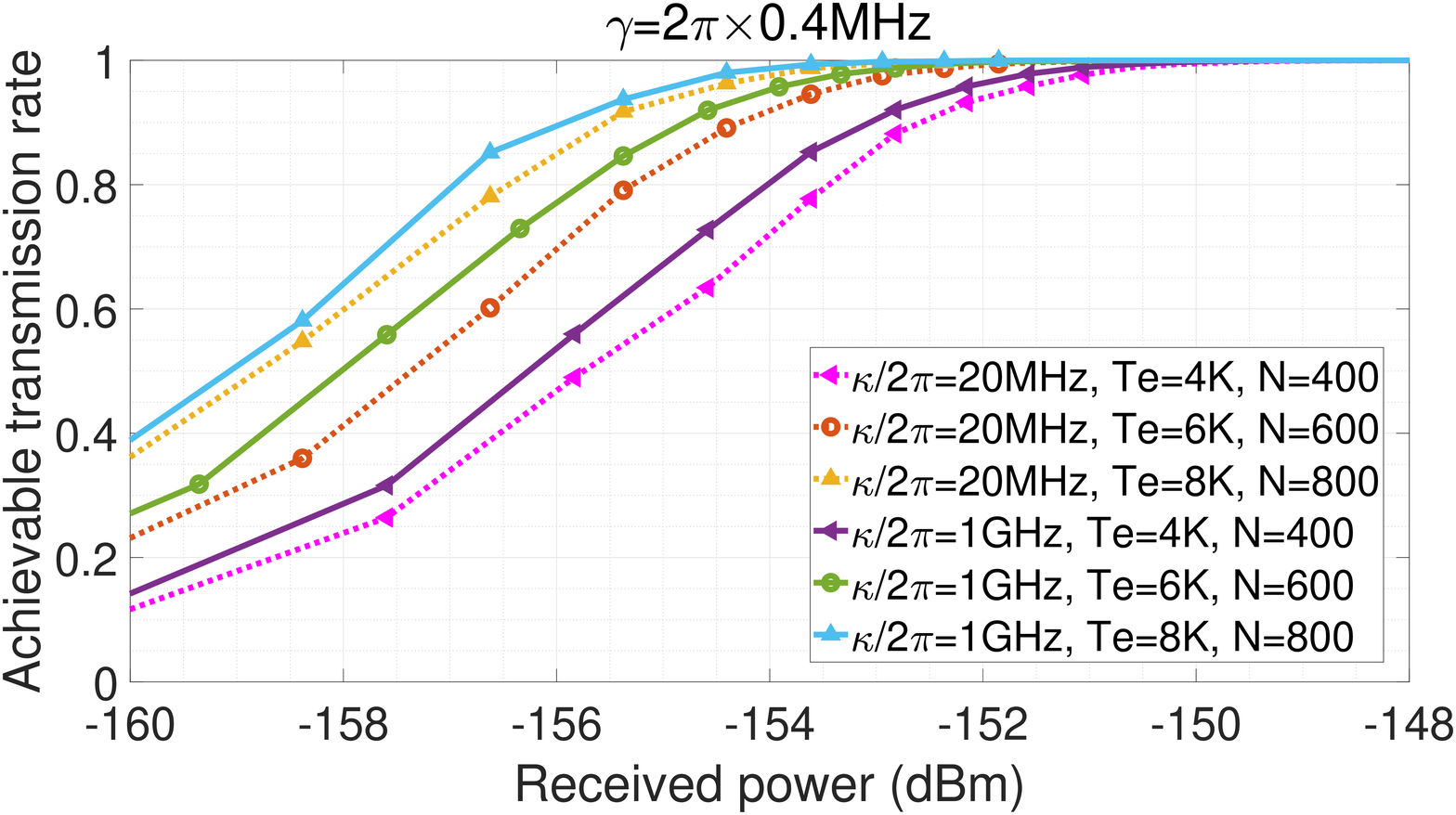}}\quad
	\subfigure[$\gamma=2\pi\times1.0MHz$]{\includegraphics[width=0.98\columnwidth]{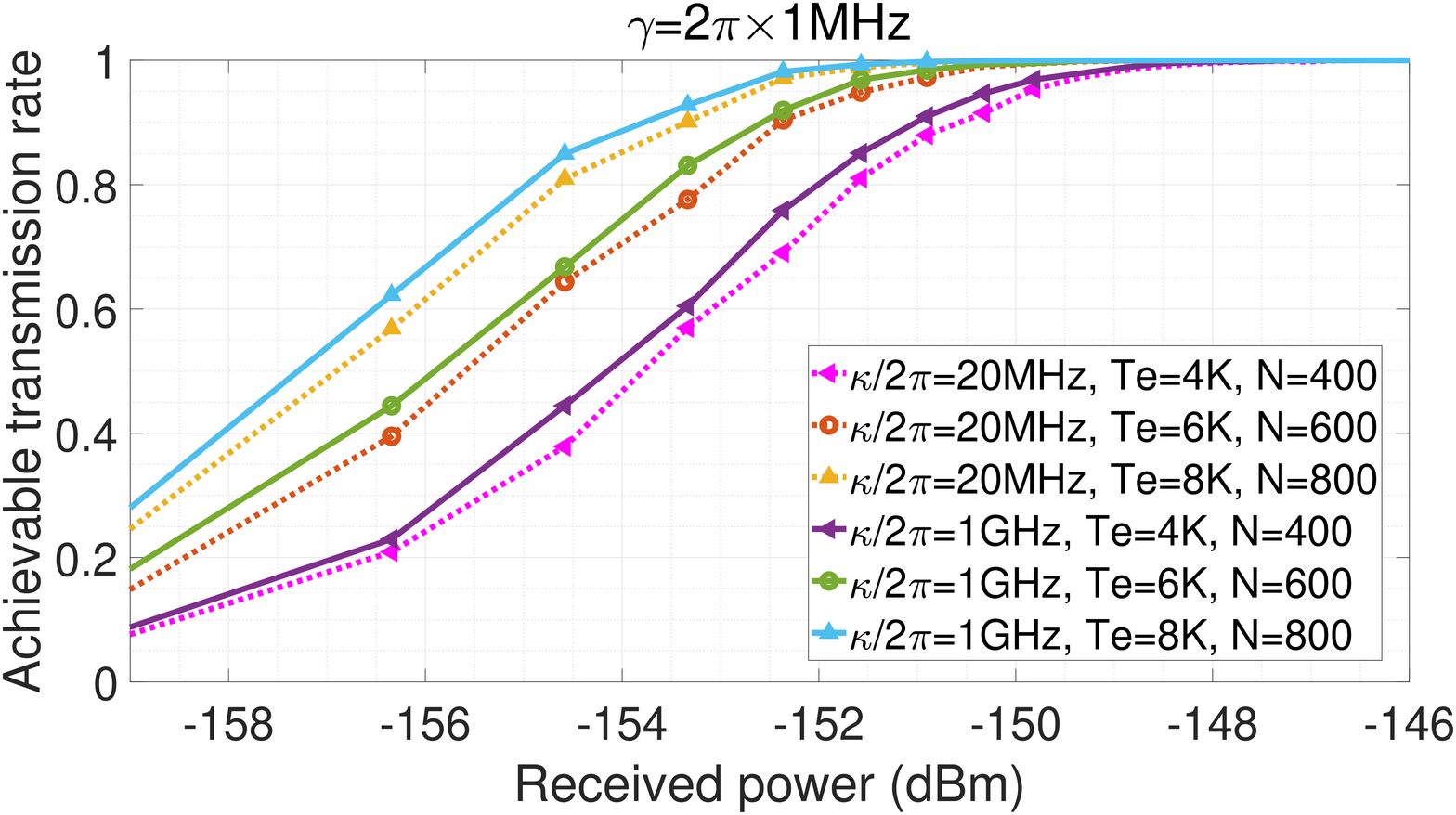}}\\	
	\caption{The achievable transmission rate with respect to the received power, decay rate $\gamma$ and transition rate $\kappa$.}\label{tranC_fig}
\end{figure*}

According to Fig. \ref{tranC_fig}, the system with larger $N$ and $\kappa$ values and smaller $\gamma$ values has better performance. Under $\kappa=2\pi\times1GHz$, $\gamma=2\pi\times0.1MHz$, $T_e=8K$ and $N=800$, when the signal power is $-156.5dBm$, the achievable transmission rate can reach 0.95. Considering the LTE signal sensitivity of $-143dBm$ to $-148dBm$ with data rate $4kbps$ and temperature $8K$ for LTE and 5G communication system under linear conversion, the sensitivity gain of our proposed structure can reach $8dB$ to $13dB$.

\section{SATURATION CHARACTERISTICS OF THREE-LEVEL SYSTEM}\label{sec5}

\subsection{Saturation Model with Fixed Time Window}
The transition rate of a three-level system depends on radiation rate $\kappa$ of the waveguide. If the time interval between the arrival of two microwave photons is very short, the detuning of the three-level system will occur, which may destroy the energy level transition.

In order to characterize the three-level saturation of a large number of photons input, we simplified the interaction between microwave photons and the three-level system. We assume that the arrival of any microwave photon will cause the saturation of the three-level system and destroy the transition, within time $\tau=\alpha\kappa^{-1}$ after the arrival of a microwave photon. To match results in the related work\cite{TLevel_t} under least-squares criterion, we perform an exhaustive search for $\alpha$ and find the optimal $\alpha=1.14$. Compared with the results in related work, the assumption of a fixed time window can well describe the saturation effect of a three-level system, as shown in Fig. \ref{good2to1_fig}.
\begin{figure}[htbp]
  \setlength{\abovecaptionskip}{-0.2cm} 
  \setlength{\belowcaptionskip}{-2cm}
  \centering
  \includegraphics[width=1\columnwidth]{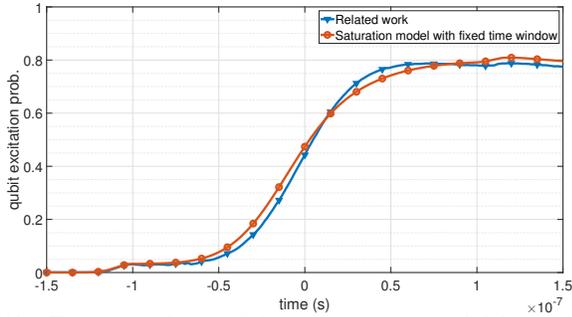}
  \caption{The time evolution of the qubit excitation probability with two-photon input in the related work (replotted from \cite{TLevel_t}) and in the saturation model with a fixed time window.}
  \label{good2to1_fig}
\end{figure}

\subsection{Survival 	Photon Characteristics under Poisson Arrival}
Surviving photons are defined as photons that are not affected by the saturation state. Only the surviving photons can affect the transition of the three-level system. We assume that the capture phase time is $T_c$, and the time window length is $\tau$, and calculate the mean and variance of the number of surviving photons under Poisson arrival. 

We first calculate the first and second moments of the survival photons under the uniform distribution of $N$ photon arrival time. We mark $I_i$ as a sign of whether the $i$-th photon is alive, given by
\begin{equation}I_{i}=\left\{\begin{array}{l}
1,\ i^{th}\ photon\ survives\\
0,\ otherwise
\end{array}\right..\end{equation}
\newtheorem{thm}{Theorem}
\begin{thm}\label{thm1}
If $T/\tau \ge 4$, given the number of incident photons $N$, the first order moment of the number of surviving photons denoted as $E_S(N)$, is given by
\begin{equation}E_S(N)=2\left(1-\frac{\tau}{T_c}\right)^{N}+(N-2)\left(1-\frac{2 \tau}{T_c}\right)^{N}.\end{equation}
Note that the second order moment of the number of surviving photons, denoted as $D_S(N)$, is given by
\begin{equation}\begin{aligned}D_{S}(N)=&2\left(1-\frac{\tau}{T_c}\right)^{N}+(N+4)\left(1-\frac{2 \tau}{T_c}\right)^{N}\\&+\left(N^{2}-7 N+12\right)\left(1-\frac{4 \tau}{T_c}\right)^{N}\\&+(6 N-18)\left(1-\frac{3 \tau}{T_c}\right)^{N}.\end{aligned}\end{equation}
\end{thm}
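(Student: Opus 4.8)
The plan is to write the number of surviving photons as a sum of indicators and compute its first and second moments by conditioning on the photon arrival positions. Since, given $N$, the arrival times $X_1,\dots,X_N$ are i.i.d.\ uniform on $[0,T_c]$, set $S=\sum_{k=1}^{N}I_k$, where $I_k=1$ exactly when no other photon lies within $\tau$ of photon $k$ on either side, i.e.\ when $(X_k-\tau,X_k+\tau)$ contains no other arrival. By exchangeability, $E_S(N)=E[S]=N\,\Pr(I_1=1)$, and since $I_k^2=I_k$, $D_S(N)=E[S^2]=E_S(N)+N(N-1)\Pr(I_1=I_2=1)$. Throughout I write $T_c$ for the window length appearing in the hypothesis $T_c/\tau\ge4$.

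First I would handle $E_S(N)$ as a warm-up. Conditioning on $X_1=u$, the remaining $N-1$ photons must avoid $(u-\tau,u+\tau)\cap[0,T_c]$, whose length $\ell_1(u)$ equals $u+\tau$ on $[0,\tau)$, equals $2\tau$ on $[\tau,T_c-\tau]$, and equals $T_c-u+\tau$ on $(T_c-\tau,T_c]$. Thus $E_S(N)=\frac{N}{T_c}\int_0^{T_c}(1-\ell_1(u)/T_c)^{N-1}\,du$, and the elementary antiderivative $\int(1-x/T_c)^{m}\,dx=-\frac{T_c}{m+1}(1-x/T_c)^{m+1}$ evaluates it in closed form: the two equal boundary strips contribute $2[(1-\tau/T_c)^N-(1-2\tau/T_c)^N]$ while the interior strip contributes $N(1-2\tau/T_c)^N$, and these combine to the claimed $E_S(N)$.

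The substantive step is $\Pr(I_1=I_2=1)$. Conditioning on $X_1=u<X_2=v$, both tagged photons survive only if $v-u\ge\tau$ (so they do not saturate each other) and the other $N-2$ photons avoid the union $(u-\tau,u+\tau)\cup(v-\tau,v+\tau)$ intersected with $[0,T_c]$, of length $\ell_2(u,v)$, contributing $(1-\ell_2/T_c)^{N-2}$. After symmetrizing, $D_S(N)-E_S(N)=\frac{2N(N-1)}{T_c^{2}}\iint_{0\le u<v\le T_c,\;v-u\ge\tau}(1-\ell_2(u,v)/T_c)^{N-2}\,du\,dv$. The hard part is partitioning this domain, since $\ell_2$ takes different forms according to whether the two $\tau$-windows are disjoint ($v-u\ge2\tau$, total length up to $4\tau$) or overlap ($\tau\le v-u<2\tau$, union length between $3\tau$ and $4\tau$), and according to whether the left window is truncated ($u<\tau$) or the right window is truncated ($v>T_c-\tau$). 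The hypothesis $T_c/\tau\ge4$ is exactly what guarantees that the fully separated, fully interior configuration of total forbidden length $4\tau$ is admissible and that no sub-region degenerates, so that all four powers $(1-j\tau/T_c)^{N}$, $j=1,2,3,4$, genuinely appear.

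Finally I would evaluate each sub-region with the same antiderivative identity, collect the coefficients of $(1-\tau/T_c)^N,\dots,(1-4\tau/T_c)^N$, add back $E_S(N)$, and match against the stated prefactors $2$, $N+4$, $(N-3)(N-4)=N^2-7N+12$, and $6(N-3)=6N-18$. I expect the overlap-and-boundary bookkeeping of the previous step to be the principal obstacle: tracking which region feeds which exponential, together with extracting the correct polynomial-in-$N$ prefactors from the $N(N-1)$ weight and the nested integrations, is delicate. A convenient global check is the limit $\tau\to0$, where $S\equiv N$, so every power tends to $1$ and the coefficients must sum to $N^2$; indeed $2+(N+4)+(N^2-7N+12)+(6N-18)=N^2$, which should catch any arithmetic slip in the accounting.
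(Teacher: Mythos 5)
Your plan coincides with the paper's own proof: the paper likewise writes the count as $\sum_i I_i$, computes $E_S(N)=N\,\mathbb{E}(I_1)$ by integrating $(1-\ell_1(u)/T_c)^{N-1}$ over the boundary and interior strips, and obtains $D_S(N)=N\,\mathbb{E}(I_1)+N(N-1)\,\mathbb{E}(I_1I_2)$ by splitting the double integral into exactly the boundary/overlap cases you describe (four regions in $t_1$, each subdivided by whether $t_2-t_1<2\tau$). Your closing check that the coefficients sum to $N^2$ as $\tau\to 0$ is a useful addition not present in the paper, but the argument itself is the same.
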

\begin{proof}
Please refer to Appendix \ref{app1}.
\end{proof}
We assume that the arrival rate of the Poisson distribution is $\lambda$. Based on Theorem \ref{thm1}, we have the following on the number of surviving photons under Poisson arrival.
\begin{lemma} \label{lemma1}
Under the Poisson distribution and the mean value $\Lambda$, we have
\begin{equation}
\mathbb{E}_{\Lambda}[\alpha^N]=e^{-(1-\alpha)\Lambda},
\end{equation}
\begin{equation}
\mathbb{E}_{\Lambda}[N\alpha^N]=\alpha\Lambda e^{-(1-\alpha)\Lambda},
\end{equation}
\begin{equation}
\mathbb{E}_{\Lambda}[N^2\alpha^N]=((\alpha\Lambda)^2+\alpha\Lambda)e^{-(1-\alpha)\Lambda}.
\end{equation}
\end{lemma}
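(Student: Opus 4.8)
The plan is to recognize all three identities as the probability generating function (PGF) of a Poisson random variable together with its first two derivatives. Writing $N\sim\mathrm{Poisson}(\Lambda)$ so that $\Pr(N=n)=\Lambda^n e^{-\Lambda}/n!$ for $n\ge 0$, the first identity is nothing but the PGF, and the remaining two are obtained by acting on it with a suitable differential operator rather than by resumming the series from scratch.

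First I would establish the first identity directly from the defining series,
\begin{equation}
\mathbb{E}_{\Lambda}[\alpha^N]=\sum_{n=0}^{\infty}\alpha^n\frac{\Lambda^n e^{-\Lambda}}{n!}=e^{-\Lambda}\sum_{n=0}^{\infty}\frac{(\alpha\Lambda)^n}{n!}=e^{-\Lambda}e^{\alpha\Lambda}=e^{-(1-\alpha)\Lambda},
\end{equation}
using the Taylor expansion of the exponential. The inner series $\sum_n(\alpha\Lambda)^n/n!=e^{\alpha\Lambda}$ converges absolutely for every real $\alpha$, so the rearrangement is unconditionally legitimate; in particular it is valid on the interval $[0,1]$ relevant to the application, where $\alpha$ appears as a survival factor of the form $1-c\tau/T_c$.

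For the remaining two identities, the key observation is that the operator $\alpha\,\partial_\alpha$ sends the monomial $\alpha^n$ to $n\alpha^n$, so applying it once and twice inside the expectation inserts the factors $N$ and $N^2$ respectively. Concretely I would compute
\begin{equation}
\mathbb{E}_{\Lambda}[N\alpha^N]=\alpha\,\partial_\alpha\,\mathbb{E}_{\Lambda}[\alpha^N]=\alpha\cdot\Lambda\,e^{-(1-\alpha)\Lambda}=\alpha\Lambda\,e^{-(1-\alpha)\Lambda},
\end{equation}
and then apply the same operator a second time,
\begin{equation}
\mathbb{E}_{\Lambda}[N^2\alpha^N]=\alpha\,\partial_\alpha\bigl(\alpha\Lambda\,e^{-(1-\alpha)\Lambda}\bigr)=\bigl((\alpha\Lambda)^2+\alpha\Lambda\bigr)e^{-(1-\alpha)\Lambda},
\end{equation}
which reproduces the two stated closed forms after collecting terms.

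The only point requiring care---and the main, though mild, obstacle---is justifying the interchange of the derivative $\partial_\alpha$ with the infinite summation that defines the expectation. This is handled by the standard theorem that a power series may be differentiated term by term within its radius of convergence; here the series in $\alpha$ has infinite radius, so both applications of $\alpha\,\partial_\alpha$ are valid for all $\alpha$. Once this interchange is granted, the two differentiations are entirely routine and the lemma follows immediately.
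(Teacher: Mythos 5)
Your proof is correct. The first identity is established exactly as in the paper, by resumming the exponential series. For the second and third identities, however, you take a genuinely different (if equally elementary) route: you act on the probability generating function with the operator $\alpha\,\partial_\alpha$ once and twice, whereas the paper resums each series directly, factoring out $e^{-(1-\alpha)\Lambda}$ and recognizing the reweighted sum as the first or second moment of a Poisson random variable with the tilted parameter $\alpha\Lambda$, i.e.\ $\mathbb{E}_{\alpha\Lambda}[N]=\alpha\Lambda$ and $\mathbb{E}_{\alpha\Lambda}[N^2]=(\alpha\Lambda)^2+\alpha\Lambda$. The two mechanisms are equivalent in content: the paper's exponential-tilting argument avoids any discussion of interchanging differentiation with summation (the only analytic point you needed to address, and which you handle correctly via term-by-term differentiation of a power series with infinite radius of convergence), while your operator approach is more systematic and would extend mechanically to higher moments $\mathbb{E}_{\Lambda}[N^k\alpha^N]$ without requiring the closed-form Poisson moments as an input. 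Either proof is complete and rigorous; your derivative computations check out, and the stated closed forms are reproduced correctly.
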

\begin{proof}
We have
\begin{equation}\begin{aligned}
\mathbb{E}_{\Lambda}[\alpha^N]&=\sum_{N=0}^\infty \alpha^N e^{-\Lambda}\frac{\Lambda^N}{N!}\\&=e^{-(1-\alpha)\Lambda}\sum_{N=0}^\infty e^{\alpha\Lambda}\frac{(\alpha\Lambda)^N}{N!} =e^{-(1-\alpha)\Lambda},
\end{aligned}\end{equation}
\begin{equation}
\begin{aligned}
\mathbb{E}_{\Lambda}[N\alpha^N]&=\sum_{N=0}^\infty N\alpha^N e^{-\Lambda}\frac{\Lambda^N}{N!}\\&=e^{-(1-\alpha)\Lambda}\sum_{N=0}^\infty Ne^{\alpha\Lambda}\frac{(\alpha\Lambda)^N}{N!} \\&=e^{-(1-\alpha)\Lambda}\mathbb{E}_{\alpha\Lambda}[N]=\alpha\Lambda e^{-(1-\alpha)\Lambda},
\end{aligned}
\end{equation}
\begin{equation}
\begin{aligned}
\mathbb{E}_{\Lambda}[N^2\alpha^N]&=\sum_{N=0}^\infty N^2\alpha^N e^{-\Lambda}\frac{\Lambda^N}{N!}\\&=e^{-(1-\alpha)\Lambda}\sum_{N=0}^\infty N^2 e^{\alpha\Lambda}\frac{(\alpha\Lambda)^N}{N!} \\&=e^{-(1-\alpha)\Lambda}\mathbb{E}_{\alpha\Lambda}[N^2]\\&=((\alpha\Lambda)^2+\alpha\Lambda) e^{-(1-\alpha)\Lambda}.
\end{aligned}
\end{equation}\end{proof}
\begin{thm}\label{thm2}
Under Poisson arrival and $T_c/\tau\ge4$, the first order and second order moments of the number of surviving photons are respectively
\begin{equation}E_{\lambda}=2 e^{-\lambda \tau}+(\lambda(T_c-2 \tau)-2) e^{-2 \lambda \tau},\end{equation}
\begin{equation}\begin{aligned}
D_{\lambda}=&2 e^{-\lambda \tau}+(\lambda(T_c-2 \tau)+4) e^{-2 \lambda \tau}\\&+(6(\lambda T_c-3 \lambda \tau)-18) e^{-3 \lambda \tau} \\&+\left((\lambda T_c-4 \lambda \tau)^{2}-6(\lambda T_c-4 \lambda \tau)+12\right) e^{-4 \lambda \tau}.
\end{aligned}\end{equation}
The variance of the number of surviving photons is
\begin{equation}\begin{aligned}
\sigma_{\lambda}^{2}&=D_{\lambda}-E_{\lambda}^{2}\\&= 2 e^{-\lambda \tau}+\lambda(T-2 \tau) e^{-2 \lambda \tau}+(2 \lambda T-10 \lambda \tau-10) e^{-3 \lambda \tau} \\
&\quad+\left(-4 \lambda^{2} T \tau+12 \lambda^{2} \tau^{2}-2 \lambda T+16 \lambda \tau+8\right) e^{-4 \lambda \tau}.
\end{aligned}\end{equation}
\end{thm}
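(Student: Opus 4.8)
The plan is to obtain the Poisson moments by averaging the conditional-on-$N$ moments of Theorem \ref{thm1} over the Poisson law of the photon count, using the moment identities of Lemma \ref{lemma1}. The bridge between the two settings is the standard property of a Poisson process: conditioned on $N$ arrivals in $[0,T_c]$, the arrival times are independent and uniform on $[0,T_c]$, which is precisely the distribution under which $E_S(N)$ and $D_S(N)$ were computed. Hence, writing $\Lambda=\lambda T_c$ for the mean photon number in the capture window and invoking the tower property, I would express $E_\lambda=\mathbb{E}_\Lambda[E_S(N)]$ and $D_\lambda=\mathbb{E}_\Lambda[D_S(N)]$ with $N\sim\mathrm{Poisson}(\Lambda)$.

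First I would observe that every summand in the formulas of Theorem \ref{thm1} has the form $c(N)\,\alpha_k^N$, where $c(N)\in\{1,N,N^2\}$ and $\alpha_k=1-k\tau/T_c$ for $k\in\{1,2,3,4\}$. The key simplification is that $(1-\alpha_k)\Lambda=(k\tau/T_c)(\lambda T_c)=k\lambda\tau$, so each identity in Lemma \ref{lemma1} converts an $\alpha_k^N$-average into the factor $e^{-k\lambda\tau}$, while $\alpha_k\Lambda=\lambda(T_c-k\tau)$ supplies the polynomial coefficients. Applying this term by term to $E_S(N)$ yields $2e^{-\lambda\tau}$ from the $2\alpha_1^N$ term and $(\lambda(T_c-2\tau)-2)e^{-2\lambda\tau}$ from the $(N-2)\alpha_2^N$ term, reproducing the claimed $E_\lambda$.

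The same bookkeeping applied to the four summands of $D_S(N)$ produces $D_\lambda$: the $2\alpha_1^N$ and $(N+4)\alpha_2^N$ terms give the $e^{-\lambda\tau}$ and $e^{-2\lambda\tau}$ contributions; the $(6N-18)\alpha_3^N$ term gives the $e^{-3\lambda\tau}$ contribution via $\mathbb{E}_\Lambda[N\alpha_3^N]=\lambda(T_c-3\tau)e^{-3\lambda\tau}$; and the quadratic $(N^2-7N+12)\alpha_4^N$ term gives the $e^{-4\lambda\tau}$ contribution, where the second-moment identity $\mathbb{E}_\Lambda[N^2\alpha_4^N]=((\lambda(T_c-4\tau))^2+\lambda(T_c-4\tau))e^{-4\lambda\tau}$ supplies the quadratic-in-$\lambda$ piece. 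Collecting the four exponential groups then reproduces the stated $D_\lambda$.

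Finally, for the variance I would square $E_\lambda=2e^{-\lambda\tau}+(\lambda(T_c-2\tau)-2)e^{-2\lambda\tau}$, which generates cross terms at $e^{-2\lambda\tau}$, $e^{-3\lambda\tau}$, and $e^{-4\lambda\tau}$, and subtract it from $D_\lambda$, grouping by powers of $e^{-\lambda\tau}$. I expect this last step to be the main obstacle: it is purely algebraic but bookkeeping-heavy. Specifically, the $e^{-2\lambda\tau}$ coefficient $\lambda(T_c-2\tau)+4$ of $D_\lambda$ must cancel the constant $4$ coming from the square of the leading $2e^{-\lambda\tau}$ term, leaving only $\lambda(T_c-2\tau)$; and the $e^{-4\lambda\tau}$ coefficient must combine the quadratic term carried over from the $N^2$-summand of $D_\lambda$ with the square of the $e^{-2\lambda\tau}$ coefficient of $E_\lambda$, so that the $\lambda^2$, $\lambda$, and constant pieces gather into $-4\lambda^2 T_c\tau+12\lambda^2\tau^2-2\lambda T_c+16\lambda\tau+8$. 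Careful collection of the remaining like powers then yields the stated $\sigma_{\lambda}^{2}$.
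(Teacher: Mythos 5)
Your proposal is correct and follows essentially the same route as the paper's proof: averaging the conditional moments of Theorem \ref{thm1} over the Poisson law via the identities of Lemma \ref{lemma1} (with $\Lambda=\lambda T_c$, so that $\alpha_k=1-k\tau/T_c$ gives $e^{-k\lambda\tau}$ and $\alpha_k\Lambda=\lambda(T_c-k\tau)$), then computing $\sigma_\lambda^2=D_\lambda-E_\lambda^2$ by direct algebra. The paper states this only as ``directly calculated,'' so your term-by-term bookkeeping is simply a more explicit version of the same argument.
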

\begin{proof}
$E_\lambda$ and $D_\lambda$ can be directly calculated based on Theorem \ref{thm1} and Lemma \ref{lemma1}. And the variance $\delta^2_\lambda$ can also be directly calculated because $\delta^2_\lambda=D_\lambda-E_\lambda^2$.
\end{proof}
An fundamental property is whether the surviving photons have a sub-Poisson or super-Poisson distribution. Let $\Delta_\lambda\triangleq E_{\lambda}-D_{\lambda}$. The survival photon distribution characteristics can be judged by the sign of $\Delta_\lambda$. 

\begin{thm}\label{thm3}
There exists certain $\lambda_0>0$, such that $\Delta_\lambda>0$ for $\lambda\in(0,\lambda_0)$,  $\Delta_\lambda<0$ for $\lambda\in(\lambda_0,+\infty)$ and $\Delta_\lambda=0$ for $\lambda=0,\lambda_0$.
\end{thm}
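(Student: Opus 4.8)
The plan is to reduce the claim to a statement about the zeros of a single scalar function. First I would substitute the closed forms of the mean $E_\lambda$ and the variance $\sigma_\lambda^2$ from Theorem \ref{thm2} into the sub-/super-Poisson indicator $\Delta_\lambda=E_\lambda-\sigma_\lambda^2$. Since both $E_\lambda$ and $\sigma_\lambda^2$ contain the identical leading term $2e^{-\lambda\tau}$, it cancels, so $\Delta_\lambda$ is a finite exponential sum $\Delta_\lambda=\sum_{k=2}^{4}P_k(\lambda)e^{-k\lambda\tau}$ in which $P_2\equiv-2$ is constant, $P_3$ is affine, and $P_4$ is quadratic in $\lambda$. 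Factoring out the strictly positive slowest factor $e^{-2\lambda\tau}$ and writing $\Delta_\lambda=e^{-2\lambda\tau}h(\lambda)$ with $h(\lambda)=P_2+P_3(\lambda)e^{-\lambda\tau}+P_4(\lambda)e^{-2\lambda\tau}$, the sign of $\Delta_\lambda$ coincides with that of $h$, so everything reduces to locating the zeros of $h$ on $[0,\infty)$.

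Next I would settle the boundary behaviour to obtain existence of $\lambda_0$. At $\lambda=0$ one has $E_0=\sigma_0^2=0$, giving $\Delta_0=h(0)=0$, which already produces the stated root at the origin. Expanding $\Delta_\lambda$ to second order (with $m\triangleq T_c/\tau\ge4$) I expect the constant and linear Taylor coefficients to vanish and the leading term to be $\Delta_\lambda=(2m-1)(\lambda\tau)^2+O(\lambda^3)$; because $m\ge4$ gives $2m-1\ge7>0$, $\Delta_\lambda$ is strictly positive for all small $\lambda>0$. At the opposite end every term of $h$ except $P_2$ carries a decaying exponential, so $h(\lambda)\to-2<0$ as $\lambda\to\infty$. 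Continuity of $h$, positivity just to the right of $0$, and the negative limit then furnish, by the intermediate value theorem, a zero $\lambda_0\in(0,\infty)$ across which $h$ passes from $+$ to $-$.

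The crux is uniqueness, i.e. showing $h$ has exactly one sign change on $(0,\infty)$ rather than returning to positive values after $\lambda_0$. My plan is to control the zeros of $h$ through its derivatives, exploiting the fact that a finite exponential sum with polynomial coefficients has only finitely many real zeros, bounded by the total degree data of the coefficients. Concretely I would differentiate, writing $h'(\lambda)=e^{-\lambda\tau}r(\lambda)$ with $r(\lambda)=q_1(\lambda)+q_2(\lambda)e^{-\lambda\tau}$ ($q_1$ affine, $q_2$ quadratic), check that $r(0)=0$ with $r'(0)>0$ (proportional to $4m-2$), so that $h$ is increasing just past the origin, and then argue that $h$ rises to a single interior maximum and decreases to its limit $-2$; equivalently, the Fano factor $\sigma_\lambda^2/E_\lambda$ crosses $1$ exactly once. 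I expect this step to be the main obstacle: the crude exponential-polynomial bound allows up to five real zeros (two of which the double root at the origin absorbs), still leaving room for three positive zeros, so a sharp count needs a careful Rolle/Budan--Fourier descent through $h'$ and $h''$, with attention to the sign of the affine coefficient $P_3$, which flips near $m=5$ and governs the large-$\lambda$ monotonicity. Once the single crossing is established, multiplying back by $e^{-2\lambda\tau}>0$ transfers the pattern to $\Delta_\lambda$ and, together with $\Delta_0=0$, yields precisely the asserted sign behaviour.
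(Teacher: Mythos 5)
Your setup, boundary analysis, and existence argument are sound and agree with the paper: after cancelling the common $2e^{-\lambda\tau}$ term, $\Delta_\lambda$ vanishes to second order at $\lambda=0$ with leading coefficient $2T_c/\tau-1\ge 7>0$, and the normalized function tends to $-2$, so a sign change exists by continuity. (Your reading of $\Delta_\lambda$ as $E_\lambda-\sigma_\lambda^2$ rather than the literal $E_\lambda-D_\lambda$ of the body text is also the one the paper's appendix actually computes with.) The problem is that the entire content of the theorem is the \emph{uniqueness} of the sign change, and on that point you stop at a plan: you yourself observe that the exponential-polynomial zero bound still permits three positive zeros after the double root at the origin is absorbed, and that a ``careful Rolle/Budan--Fourier descent'' is needed, but you do not carry it out. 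As written, the proposal establishes only that some $\lambda_0$ exists, not that $\Delta_\lambda<0$ on all of $(\lambda_0,+\infty)$.

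The paper closes exactly this gap with a normalization that your choice of factor obscures. It sets $a=\lambda\tau$, $\beta=T_c/\tau\ge4$ and studies $S(a)=\Delta_\lambda e^{4a}$ (not $e^{2a}$), so that the coefficient of $e^{-4\lambda\tau}$ becomes a bare quadratic polynomial in $a$. Differentiating three times annihilates that polynomial, leaving $S^{(k)}(a)=p_k(a)e^{a}$ for all $k\ge3$ with $p_k(a)=-2^{k+1}e^{a}-[(2\beta-10)(a+k)-10]$, i.e.\ a negative multiple of $e^a$ minus an affine function. The appendix's Lemma 4 shows that $p_k(0)\le 0$ forces $p_k(a)\le 0$ for all $a\ge0$; combined with $p_k(0)\to-\infty$ in $k$ and $p_k(a)\to-\infty$ in $a$, some derivative of $S$ is either everywhere nonpositive or positive-then-negative with a single crossing, and a Rolle-type descent using $S(0)=S'(0)=0$ and $S''(0)=4\beta-2>0$ propagates the single $+\to-$ crossing down to $S$ itself. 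With your factor $e^{2\lambda\tau}$ the residual terms $P_3(\lambda)e^{-\lambda\tau}+P_4(\lambda)e^{-2\lambda\tau}$ never collapse under differentiation to a form with a provable one-sign-change property, which is precisely where your descent stalls (as you note, the sign of $P_3$ flips near $\beta=5$). If you adopt the $e^{4a}$ normalization and prove the analogue of Lemma 4, your outline completes along the paper's lines.
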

\begin{proof}
Please refer to Appendix \ref{app2}.
\end{proof}
It can be seen that for lower arrival rate, the surviving photons exhibit a sub-Poisson distribution; and as the arrival rate increases, the survival photon characteristics turns to super-Poisson distribution. The $\Delta_\lambda$ under several $T/\tau$ values is shown in Fig. \ref{subposs_fig}.
\begin{figure}[htbp]
  \setlength{\abovecaptionskip}{-0.2cm} 
  \setlength{\belowcaptionskip}{-2cm}
  \centering
  \includegraphics[width=1\columnwidth]{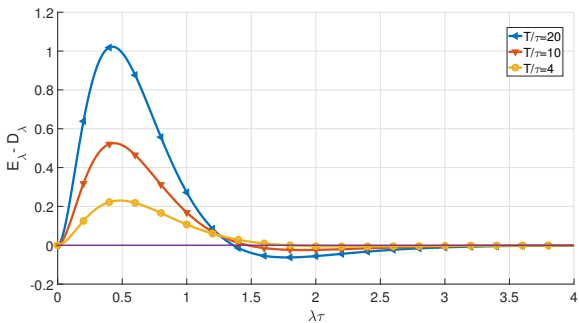}
  \caption{$\Delta_\lambda$ under different $T/\tau$.}
  \label{subposs_fig}
\end{figure}

\subsection{Performance of the Three-level System under Saturation Model}\label{sec_zu_rate}
The excitation probability of qubit will decrease because the large microwave photon arrival rate causes long-term saturation of the three-level system. We calculate the qubit excitation probability under different reset conditions of the three-level system, as shown in Fig. \ref{sa2to1_fig} and Fig. \ref{sa2to1re_fig}.
\begin{figure}[htbp]
  \setlength{\abovecaptionskip}{-0.2cm} 
  \setlength{\belowcaptionskip}{-2cm}
  \centering
  \includegraphics[width=1\columnwidth]{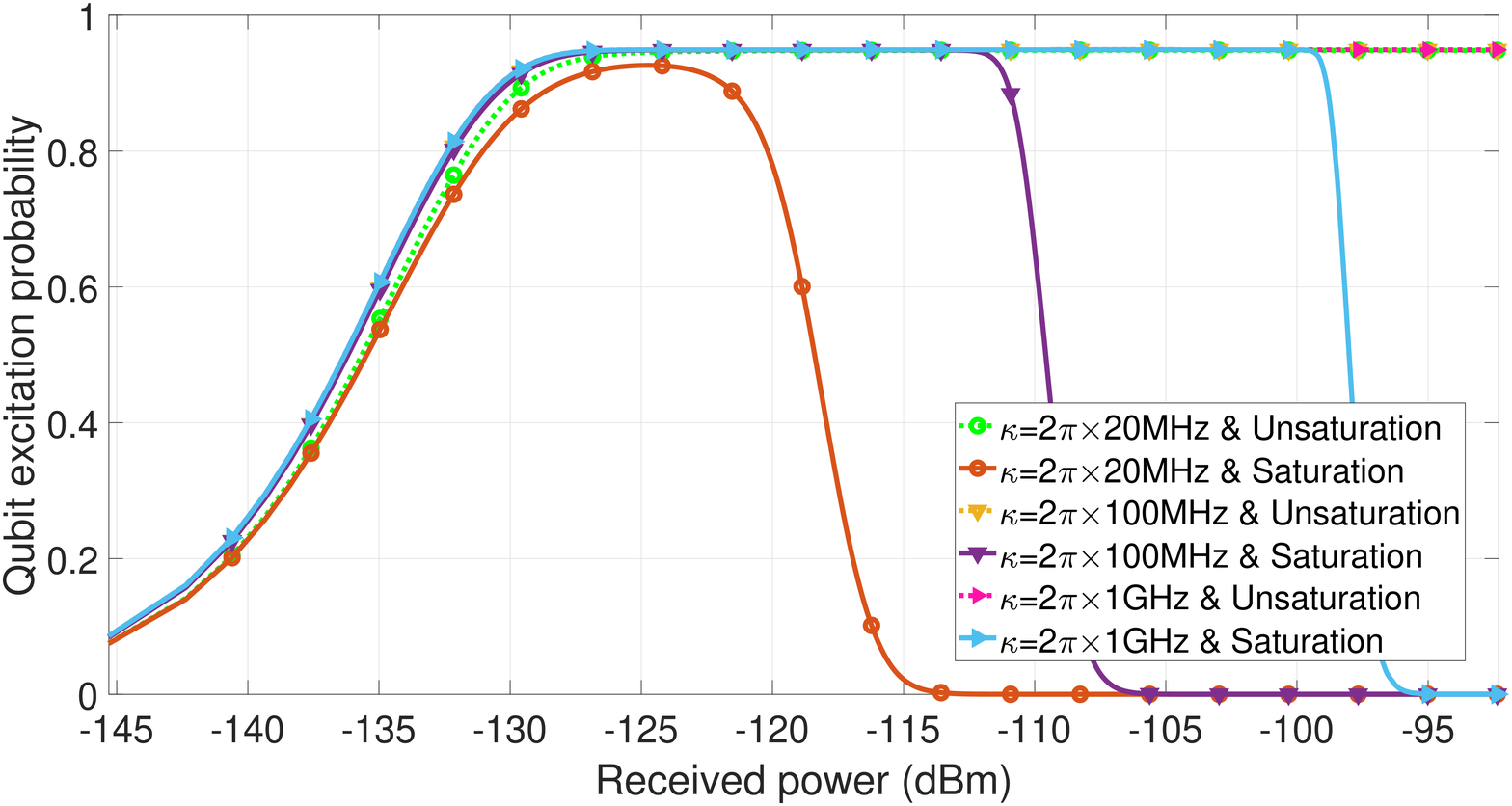}
  \caption{The qubit excitation probability under correct reset and different transition rates $\kappa$.}
  \label{sa2to1_fig}
\end{figure}

\begin{figure}[htbp]
  \setlength{\abovecaptionskip}{-0.2cm} 
  \setlength{\belowcaptionskip}{-2cm}
  \centering
  \includegraphics[width=1\columnwidth]{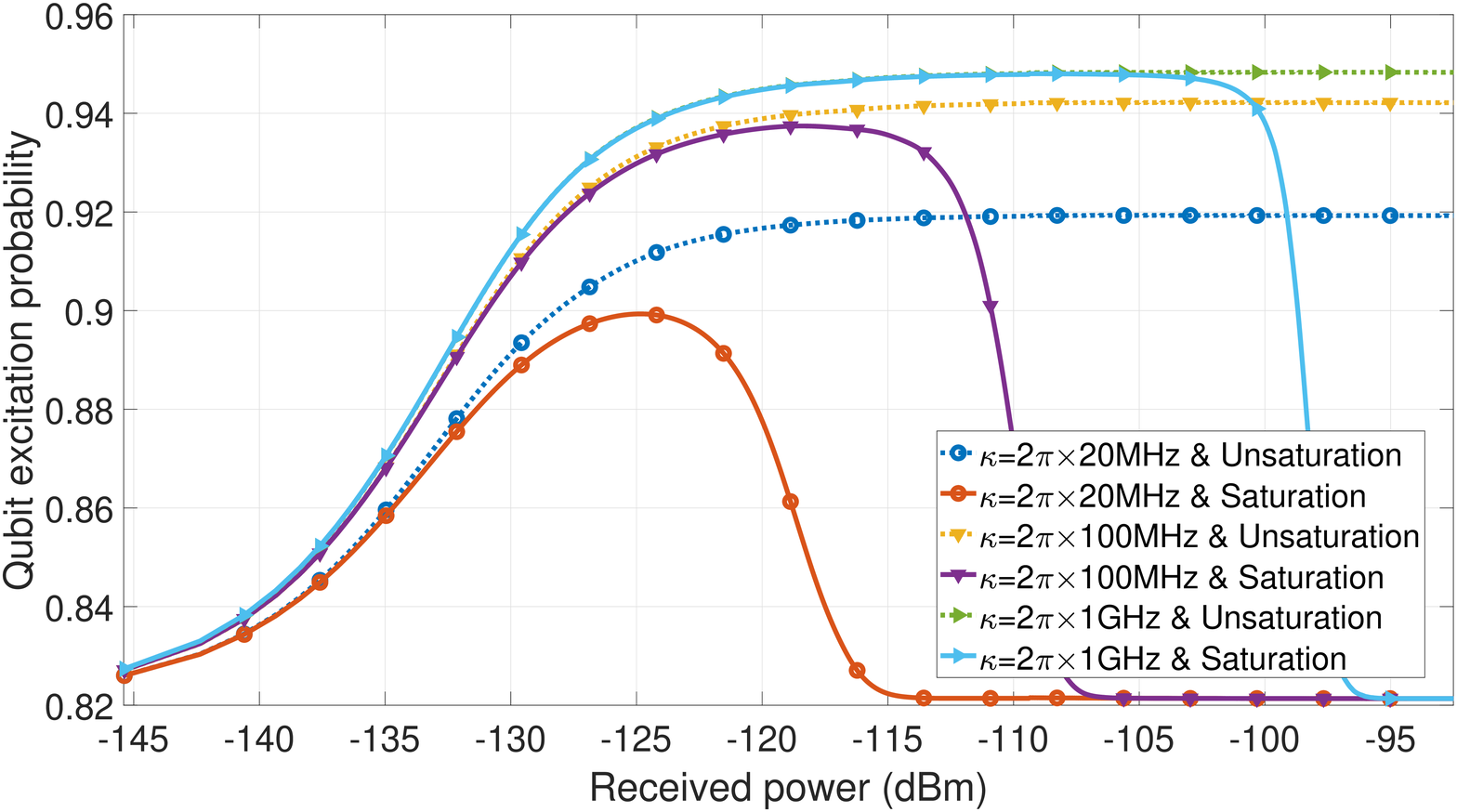}
  \caption{The qubit excitation probability under wrong reset and different transition rates $\kappa$.}
  \label{sa2to1re_fig}
\end{figure}

In the low received power regime, the saturation effect is not obvious. In this case, the gap between considering saturation and not considering saturation is negligible. For large transition rate $\kappa$, the excitation rate decreases after a flat regime. In this case, compared with the model without saturation, the maximum communication performance of the three-level system hardly decreases because the dynamic range of the qubit excitation rate is only slightly reduced.

Using the qubit excitation rate under the saturation model and the parameters in Section \ref{sec_rate} on the three-level HMM, we can obtain the achievable transmission rate, as shown in Fig. \ref{ratezu_fig}. It can be seen that when the transition rate $\kappa$ is large, considering the saturation model has almost no effect on the achievable rate.

\begin{figure}[htbp]
  \setlength{\abovecaptionskip}{-0.2cm} 
  \setlength{\belowcaptionskip}{-2cm}
  \centering
  \includegraphics[width=1\columnwidth]{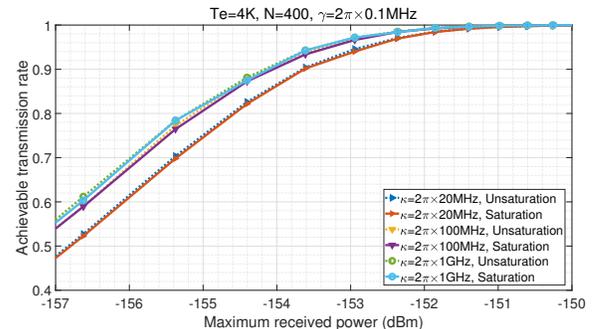}
  \caption{Achievable transmission rate under different transition rates $\kappa$.}
  \label{ratezu_fig}
\end{figure}

\subsection{Saturation Cut-off Point Characteristics}

Based on the saturation characteristics of the three-level system in Section \ref{sec_zu_rate}, we define the cutoff point at the point where the qubit excitation rate drops by 3dB after the maximum value. We further define the cutoff photon number as the mean photon number at the saturation cutoff point. The cutoff photon number under different decay rates $\kappa$ and capture time $T_c$ is shown in Fig. \ref{cutN_fig}. The number of cut-off photons under different decay rate $\gamma$ is shown in Fig. \ref{cutgamma_fig}.

\begin{figure}[htbp]
  \setlength{\abovecaptionskip}{-0.2cm} 
  \setlength{\belowcaptionskip}{-2cm}
  \centering
  \includegraphics[width=1\columnwidth]{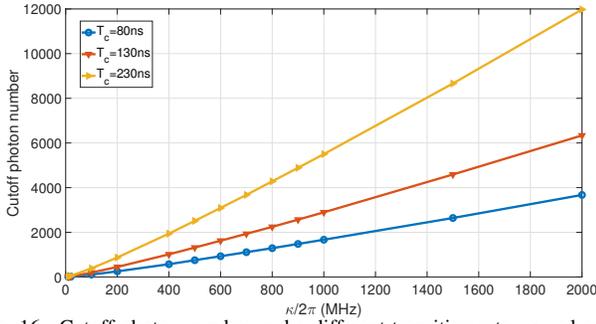}
  \caption{Cutoff photon number under different transition rates $\kappa$ and capture time $T_c$.}
  \label{cutN_fig}
\end{figure}

\begin{figure}[htbp]
  \setlength{\abovecaptionskip}{-0.2cm} 
  \setlength{\belowcaptionskip}{-2cm}
  \centering
  \includegraphics[width=1\columnwidth]{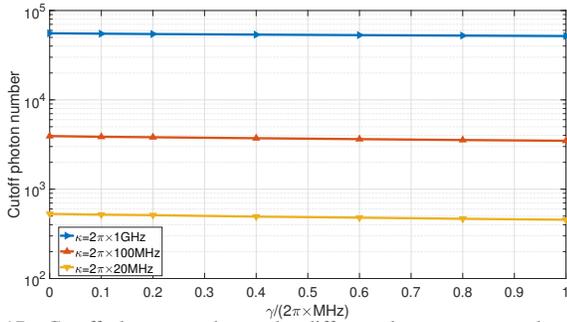}
  \caption{Cutoff photon number under different decay rates $\gamma$ and transition rates $\kappa$.}
  \label{cutgamma_fig}
\end{figure}

It can be seen that $\gamma$ has little effect on the cut-off point in the common range of $\gamma$. On the other hand, $\kappa$ determines the normalized saturation time window number, and further determines the cut-off photon number to a large extent. Plotted against $\kappa T_c$ in $\gamma=0$, the cutoff points under different values of $T_c$ overlap very well, as shown in Fig. \ref{cutkt_fig}. We further use $\kappa T_c$ as the independent variable for fitting, and the fitting result is shown in Fig. \ref{cutkt_fig}. When $\kappa T_c$ is large, the fitting result is very close to the simulation result. The fitted equation is given by,
\begin{equation}n_{\text {cutoff}}=1.457\left(\mathrm{kT}_{c}\right)^{1.132}-0.8766.\end{equation}

\begin{figure}[htbp]
  \setlength{\abovecaptionskip}{-0.2cm} 
  \setlength{\belowcaptionskip}{-2cm}
  \centering
  \includegraphics[width=1\columnwidth]{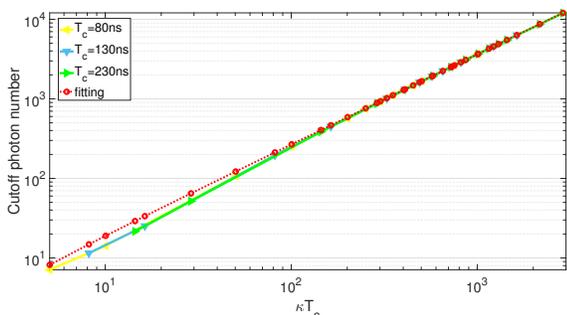}
  \caption{Cutoff photon number with $\kappa T_c$ as the horizontal axis and the fitting result.}
  \label{cutkt_fig}
\end{figure}

\section{CONCLUSION}\label{sec6}
We have adopted artificial $\Lambda$-type three level system with superconducting devices for microwave signal detection. Based on the state transition principles, we have proposed a statistical model for microwave signal detection. We have also investigated the achievable transmission rate and signal detection based on the statistical model. It is predicted that high detection sensitivity can be achieved by the proposed structure. We further studied the saturation characteristics of the three-level system and showed that it has no effect on the very weak microwave communication.

The three-level system has great application potential in the field of outer space communication due to significantly high sensitivity at low temperature, for example, deployed on spacecraft or satellites. Higher transmission rate can be achieved via deploying parallel three-level systems centered at separated frequencies.

In future work, we will study the communication signal processing of three-level systems, including symbol synchronization and the saturation problem under high received power. In addition, receiver device fabrication and real experiments is another major endeavor.

\appendix\label{app}
\subsection{Proof of Theorem \ref{thm1}}\label{app1}
We have the following
\begin{equation}E_S(N)=\mathbb{E}\left(\sum_{i=1}^{N} I_{i}\right)=N \mathbb{E}\left(I_{1}\right),\end{equation}

$\mathbb{E}\left(I_{1}\right)$ represents the probability of survival of a uniformly distributed photon. So $\mathbb{E}\left(I_{1}\right)$ is given by
\begin{equation}\begin{aligned}\mathbb{E}\left(I_{1}\right)=&\frac{2}{T_c} \int_{0}^{\tau}\left(1-\frac{t+\tau}{T_c}\right)^{N-1} d t\\&+\frac{T_c-2 \tau}{T_c}\left(1-\frac{2 \tau}{T_c}\right)^{N-1}.\end{aligned}\end{equation}
Thus we have that
\begin{equation}\begin{aligned}E_S(N)&=N \mathbb{E}\left(I_{1}\right)\\&=2\left(1-\frac{\tau}{T_c}\right)^{N}+(N-2)\left(1-\frac{2 \tau}{T_c}\right)^{N}.\end{aligned}\end{equation}

And the second moment is given by
\begin{equation}\begin{aligned}D_{S}(N)&=\mathbb{E}\left[\left(\sum_{i=1}^{N} I_{i}\right)^{2}\right]\\&=N \mathbb{E}\left(I_{1}^{2}\right)+N(N-1) \mathbb{E}\left(I_{1} I_{2}\right).\end{aligned}\end{equation}
Obviously we have
\begin{equation}\begin{aligned}N \mathbb{E}\left(I_{1}^{2}\right)&=N \mathbb{E}\left(I_{1}\right)\\&=2\left(1-\frac{\tau}{T_c}\right)^{N}+(N-2)\left(1-\frac{2 \tau}{T_c}\right)^{N}.\end{aligned}\end{equation}

$\mathbb{E}\left(I_{1} I_{2}\right)$ represents the probability that two independent photons that obey a uniform distribution are surviving photons. Assume that the arrival times of the two photons are $t_1$ and $t_2$ respectively, and $t_1\le t_2$. According to the value of $t_1$, we divide $\mathbb{E}\left(I_{1} I_{2}\right)$ into four parts, i.e.,
\begin{equation}\mathbb{E}\left(I_{1} I_{2}\right)=\frac{2}{T^2_c}(A+B+C+D).\end{equation}

\begin{itemize}
\item [1)] $0\le t_1<\tau$
\begin{equation}\begin{aligned}A=&2\int_{0}^{\tau} d t_{1} \int_{t_{1}+\tau}^{t_{1}+2 \tau} d t_{2}\left(1-\frac{t_{2}+\tau}{T_c}\right)^{N-2}\\&+\int_{0}^{\tau}d t_{1}\int_{t_{1}+2 \tau}^{T_c-\tau} d t_{2}\left(1-\frac{t_{1}+3 \tau}{T_c}\right)^{N-2}.\end{aligned}\end{equation}

Further, we have
\begin{equation}\begin{aligned}\frac{2}{T^2_c}A=&\frac{4}{N(N-1)}\left(\frac{T_c-2 \tau}{T_c}\right)^{N}\\&+\frac{2N-10}{N(N-1)}\left(\frac{T_c-3 \tau}{T_c}\right)^{N}\\&+\frac{6-2N}{N(N-1)}\left(\frac{T_c-4 \tau}{T_c}\right)^{N}.\end{aligned}\end{equation}

\item [2)] $\tau\le t_1<T_c-3\tau$
\begin{equation}\begin{aligned}B=&2\int_{\tau}^{T_c-3 \tau} d t_{1} \int_{t_{1}+\tau}^{t_{1}+2 \tau} d t_{2}\left(1-\frac{t_{2}-t_{1}+2 \tau}{T_c}\right)^{N-2}\\&+\int_{\tau}^{T_c-3 \tau} d t_{1}\int_{t_{1}+2 \tau}^{T_c-\tau} d t_{2}\left(1-\frac{4 \tau}{T_c}\right)^{N-2}.\end{aligned}\end{equation}

Further, we have
\begin{equation}\begin{aligned}\frac{2}{T_c^2}B=&\frac{4}{N-1}\left(1-\frac{4 \tau}{T_c}\right)\left(1-\frac{3 \tau}{T_c}\right)^{N-1}\\&+\frac{N-5}{N-1}\left(1-\frac{4 \tau}{T_c}\right)^{N}.\end{aligned}\end{equation}

\item [3)] $T_c-3\tau\le t_1<T_c-2\tau$
\begin{equation}\begin{aligned}C=&2\int_{T_c-3 \tau}^{T_c-2 \tau} d t_{1} \int_{t_{1}+\tau}^{T_c-\tau} d t_{2}\left(1-\frac{t_{2}-t_{1}+2 \tau}{T_c}\right)^{N-2}\\&+\int_{T_c-3 \tau}^{T_c-2 \tau} d t_{1}\int_{T_c-\tau}^{t_{1}+2 \tau} d t_{2}\left(\frac{t_{1}-\tau}{T_c}\right)^{N-2}.\end{aligned}\end{equation}

Further, we have
\begin{equation}\begin{aligned}\frac{2}{T_c^2}C=&\frac{6N\tau-6T_c}{N(N-1)T_c}\left(1-\frac{3 \tau}{T_c}\right)^{N-1}\\&+\frac{6}{N(N-1)}\left(1-\frac{4 \tau}{T_c}\right)^{N}.\end{aligned}\end{equation}

\item [4)] $T_c-2\tau\le t_1<T_c-\tau$
\begin{equation}D=\int_{T_c-2 \tau}^{T_c-\tau} d t_{1} \int_{t_{1}+\tau}^{T_c} d t_{2}\left(1-\frac{T_c-t_{1}+\tau}{T_c}\right)^{N-2}.\end{equation}

Further, we have
\begin{equation}\begin{aligned}\frac{2}{T_c^2}D=&\frac{2}{N(N-1)}\left[\left(1-\frac{2 \tau}{T_c}\right)^{N}-\left(1-\frac{3 \tau}{T_c}\right)^{N}\right]\\&-\frac{2\tau}{(N-1)T_c}\left(1-\frac{3 \tau}{T_c}\right)^{N-1}.\end{aligned}\end{equation}
\end{itemize}

According to the above results, the second moment is given by
\begin{equation}\begin{aligned}D_{N}=&2\left(1-\frac{\tau}{T_c}\right)^{N}+(N+4)\left(1-\frac{2 \tau}{T_c}\right)^{N}\\&+\left(N^{2}-7 N+12\right)\left(1-\frac{4 \tau}{T_c}\right)^{N}\\&+(6 N-18)\left(1-\frac{3 \tau}{T_c}\right)^{N}.\end{aligned}\end{equation}

\subsection{Proof of Theorem \ref{thm3}}\label{app2}
We define $a=\lambda\tau$, $\beta=T_c/\tau\ge4$ and $S(a)=\Delta_\lambda e^{4a}$. And we have the following
\begin{equation}\begin{aligned}S(a)=&-2 e^{2 a}-[(2 \beta-10) a-10] e^{a}+(4 \beta-12) a^{2}\\&+(2 \beta-16) a-8,\end{aligned}\end{equation}
\begin{equation}\begin{aligned}\frac{\mathrm{d} S}{\mathrm{d} a}=&-4 e^{2 a}-[(2 \beta-10)(a+1)-10] e^{a}\\&+(8 \beta-24) a+(2 \beta-16),\end{aligned}\end{equation}
\begin{equation}\frac{\mathrm{d}^2 S}{\mathrm{d} a^2}=-8 e^{2 a}-[(2 \beta-10)(a+2)-10] e^{a}+(8 \beta-24),\end{equation}
\begin{equation}\begin{aligned}\frac{\mathrm{d}^k S}{\mathrm{d} a^k}&=-2^{k+1} e^{2 a}-[(2 \beta-10)(a+k)-10] e^{a}\\&\triangleq p_{k}(a) e^{a}\end{aligned},\ k \geq 3.\end{equation}

When $\lambda=0$, we have $S(0)=0$, $S^{(1)}(0)=0$ and $S^{(2)}(0)=4\beta-2>0$. In order to illustrate the value of $S$ with $a>0$, we further prove some theorems of $p_k(a)$.
\begin{lemma}\label{thm4}
If $k\ge1$ and $p_k(0)\le0$, we have $p_k(a)\le0$ with $a\ge0$.
\end{lemma}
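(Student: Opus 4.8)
The plan is to read off the explicit form of $p_k$ from its defining relation and then show it is monotone on $[0,\infty)$. Dividing the displayed expression for $\frac{\mathrm{d}^k S}{\mathrm{d} a^k}$ by $e^a$ gives $p_k(a) = -2^{k+1}e^a - (2\beta-10)(a+k) + 10$, so $p_k$ is a smooth function of $a$ whose only nonlinearity is the single exponential term. Differentiating once yields $p_k'(a) = -2^{k+1}e^a - (2\beta-10)$, which has no $a$-dependence beyond the exponential.

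First I would establish that $p_k'(a) < 0$ for every $a \geq 0$. Rewriting the desired inequality as $2^{k+1}e^a \geq 10 - 2\beta$, I split according to the sign of $10 - 2\beta$. When $\beta \geq 5$ the right-hand side is nonpositive while the left-hand side is strictly positive, so the inequality is immediate. When $4 \leq \beta < 5$ the right-hand side is at most $10 - 8 = 2$, whereas for $k \geq 1$ and $a \geq 0$ the left-hand side satisfies $2^{k+1}e^a \geq 2^{k+1} \geq 4$; hence it still dominates. In both regimes $p_k$ is therefore strictly decreasing on $[0,\infty)$.

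With monotonicity in hand, the conclusion is immediate: for any $a \geq 0$ we have $p_k(a) \leq p_k(0)$, and the hypothesis $p_k(0) \leq 0$ then gives $p_k(a) \leq 0$, as claimed.

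The only delicate point — and the one I would expect to be the main obstacle — is the regime $4 \leq \beta < 5$, where the coefficient $2\beta - 10$ is negative and so the sign of $p_k'$ is not obvious. The resolution is the quantitative comparison $2^{k+1} \geq 4 > 2 \geq 10 - 2\beta$, which crucially uses both the standing assumption $\beta = T_c/\tau \geq 4$ and the hypothesis $k \geq 1$; relaxing either bound would break the domination of the exponential term and invalidate the argument.
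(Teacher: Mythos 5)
Your proof is correct, and its core mechanism is the same as the paper's: both arguments establish that $p_k$ cannot rise above its value at $a=0$ and then invoke the hypothesis $p_k(0)\le 0$. The one substantive difference is how the delicate regime $2\beta-10<0$ is closed. You differentiate and appeal to the standing assumption $\beta=T_c/\tau\ge 4$, which gives $10-2\beta\le 2<4\le 2^{k+1}\le 2^{k+1}e^{a}$ and hence $p_k'(a)<0$ on $[0,\infty)$. The paper instead bounds the increment via $e^{a}-1\ge a$ to get $p_k(a)-p_k(0)\le -\left(2^{k+1}+2\beta-10\right)a$, and in the case $2\beta-10\le 0$ it deduces positivity of the coefficient $2^{k+1}+2\beta-10$ from the hypothesis $p_k(0)\le 0$ itself (using $(2\beta-10)(k-1)\le 0<10$), so its version of the lemma needs only the stated hypotheses $k\ge1$ and $p_k(0)\le0$ and holds for arbitrary $\beta$. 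Both closures are valid where the lemma is applied; yours is slightly more direct but imports a constraint not listed in the lemma statement, whereas the paper's is self-contained. Either way the conclusion $p_k(a)\le p_k(0)\le 0$ follows.
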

\begin{proof}
we have
\begin{equation}p_{k}(0)=-2^{k+1}-(2 M-10) k+10 \leq 0,\end{equation}
\begin{equation}\begin{aligned}p_{k}(a)-p_{k}(0)&=-2^{k+1}\left(e^{a}-1\right)-(2 M-10) a \\&\leq-\left(2^{k+1}+2 M-10\right)a.\end{aligned}\end{equation}
If $2M-10\le0$, we have the following
\begin{equation}-2^{k+1}-(2M-10)<-2^{k+1}-(2M-10)k+10\le0,\end{equation}
\begin{equation}p_{k}(a)-p_{k}(0)\leq-\left(2^{k+1}+2 M-10\right)a\le0.\end{equation}
If $2M-10>0$, $p_k(a)-p_k(0)\le0$ obviously.
Thus, we prove
\begin{equation}p_{k}(a)=p_k(0)+p_{k}(a)-p_k(0)\le0.\end{equation}
\end{proof}

Because of $\lim_{k\to\infty}p_k(0)\to-\infty$ and $\lim_{a\to+\infty}p_k(a)\to-\infty$, we let $k^*$ be the smallest positive integer such that $p_k(a)<0$. If $k^*\ge4$, $S^{(3)}$ is positive first and then negative as $a$ increases. When $k<4$, $S^{(3)}\le0$ always holds. In other words, $S,\ S^{(1)}$ and $S^{(2)}$ are all positive first and then negative with $a>0$.

\small{\baselineskip = 10pt
	\bibliographystyle{IEEEtran}
    \bibliography{ustc}

\end{document}